\def\code{{\tt SUSY\_FLAVOR}}
\newif\ifContLineOne
\newif\ifContLineTwo
\newif\ifContLineThree
\def\conC#1{\vbox{\ialign{##\crcr
  \ifContLineThree\hrulefill\else\vphantom{\hrulefill}\fi\crcr
  \noalign{\kern3.2pt\nointerlineskip}
  \ifContLineTwo\hrulefill\else\vphantom{\hrulefill}\fi\crcr
  \noalign{\kern3.2pt\nointerlineskip}
  \ifContLineOne\hrulefill\else\vphantom{\hrulefill}\fi\crcr
  \noalign{\nointerlineskip}
  $\hfil\textstyle{\vbox to 14pt{}#1}\hfil$\crcr}}}
\def\DrawLeg#1#2{
  \kern-.2pt              
  \dimen2 =#1             
  \advance\dimen2 by 2pt  
  \dimen3 = 10.6pt        
  \dimen4 =3.6pt          
  \advance\dimen3 by -\dimen2 
  \multiply\dimen4 by #2
  \advance\dimen3 by \dimen4
  \raise\dimen2 \hbox{\vrule height\dimen3 width .4pt} 
  \kern-.2pt}             
\def\begC#1#2{\setbox0 =\hbox{$\textstyle{#2}$}
  \dimen0=.5\wd0 \dimen1=\ht0
  \conC{\hskip\dimen0}
  \count255=#1
  \ifnum\count255 =1 \ContLineOnetrue\else
  \ifnum\count255 =2 \ContLineTwotrue\else
  \ifnum\count255 =3 \ContLineThreetrue\fi\fi\fi
  \DrawLeg{\dimen1}{\count255}
  \conC{\hskip\dimen0}
  \kern-\dimen0\kern-\dimen0 \box0}
\def\endC#1#2{\setbox0 =\hbox{$\textstyle{#2}$}
  \dimen0=.5\wd0 \dimen1=\ht0
  \conC{\hskip\dimen0}
  \count255=#1
  \ifnum\count255 =1 \ContLineOnefalse\else
  \ifnum\count255 =2 \ContLineTwofalse\else
  \ifnum\count255 =3 \ContLineThreefalse\fi\fi\fi
  \DrawLeg{\dimen1}{\count255}
  \conC{\hskip\dimen0}
  \kern-\dimen0\kern-\dimen0 \box0}
\newtheorem{def1}{Definition}
\newtheorem{thm}{Theorem}
\newtheorem{lem}{Lemma}
\theoremstyle{definition}
\theoremstyle{remark}
\newtheorem{rem}{Remark}
\def\theequation{\arabic{section}.\arabic{equation}}
\def\eq#1{eq.~(\ref{#1})}
\def\Eq#1{Eq.~(\ref{#1})}
\def\eqs#1#2{eqs.~(\ref{#1}) and (\ref{#2})}
\def\Refs#1{refs.~\cite{#1}}
\newcommand{\bra}[1]{\langle #1|}
\newcommand{\ket}[1]{|#1\rangle}
\newcommand{\ie}{{\it i.e., }}
\newcommand{\eg}{{\it e.g., }}
\newcommand{\Lu}{{\cal L}}
\newcommand{\Ou}{{\cal O}}
\newcommand{\bea}{\begin{eqnarray}}
\newcommand{\eea}{\end{eqnarray}}
\newcommand{\nn}{\nonumber\\}
\newcommand{\osum}[1]{\vspace{0.5cm}{\pmb{\Circle}}{\hspace{-0.4cm}\sum_{#1}\hspace{0.1cm}} }
\definecolor{orange}{rgb}{0.9,0.2,0}
\definecolor{brown}{rgb}{0.7,0.3,0.2}
\definecolor{fuxia}{rgb}{1,0,1}
\definecolor{skyblue}{rgb}{0,0.1,0.9}
\definecolor{violetred}{rgb}{0.8,0.13,0.56}
\definecolor{deeppink}{rgb}{1.00,0.08,0.5}
\definecolor{pink}{rgb}{1.00,0.75,0.80}
\definecolor{orchid}{rgb}{0.85,0.44,0.84}
\definecolor{lightpink}{rgb}{1.00,0.71,0.76}
\definecolor{bluish}{rgb}{0,0.6,0.8}
\title{\bf Mass Insertions vs.  Mass Eigenstates calculations in Flavour Physics}
\author{A.   Dedes$^{1,2}$\footnote{email: {\tt adedes@cc.uoi.gr}},~ 
M.   Paraskevas$^{1}$\footnote{email: {\tt mparask@grads.uoi.gr}},
J.   Rosiek$^{3}$\footnote{email: {\tt janusz.rosiek@fuw.edu.pl }},~
K.   Suxho$^{1}$\footnote{email: {\tt csoutzio@cc.uoi.gr}},~
K.   Tamvakis$^{1}$\footnote{email: {\tt tamvakis@uoi.gr}}}
\affil{\small $^{1}$Department of Physics, Division of Theoretical
  Physics, \\ University of Ioannina, GR 45110, Greece}
\affil{\small $^{2}$University of Athens, Physics Department, \\
Nuclear and Particle Physics Section, GR 15771 Athens, Greece}
\affil{\small $^{2}$Institute of Theoretical Physics, Physics
  Department, Warsaw University, \\ Pasteura 5, 02-093 Warsaw, Poland}
\date{April 7, 2015}
\begin{document}

\maketitle

\begin{abstract}

We present and prove a theorem of matrix analysis, the Flavour
Expansion Theorem (or FET), according to which, an analytic function
of a Hermitian matrix can be expanded polynomially in terms of its
off-diagonal elements with coefficients being the divided differences
of the analytic function and arguments the diagonal elements of the
Hermitian matrix.  The theorem is applicable in case of flavour
changing amplitudes.  At one-loop level this procedure is particularly
natural due to the observation that every loop function in the
Passarino-Veltman basis can be recursively expressed in terms of
divided differences.  FET helps to algebraically translate an
amplitude written in mass eigenbasis into flavour mass insertions,
without performing diagrammatic calculations in flavour basis.  As a
non-trivial application of FET up to a third order, we demonstrate its
use in calculating strong bounds on the real parts of flavour changing
mass insertions in the up- squark sector of the MSSM from neutron
Electric Dipole Moment (nEDM) measurements, assuming that CP-violation
arises only from the CKM matrix.

\end{abstract}

\newpage
\section{Introduction}
\setcounter{equation}{0}
\label{sec:intro}

Within the general framework of a perturbative Quantum Field Theory
(QFT), the standard strategy followed when calculating physical
transition amplitudes, is to express the Lagrangian density in a
particular field basis, commonly referred to as {\sl mass eigenstate}
basis.  Contrary to any other possible choice, only in this case the
states of the theory correspond to physical particles with definite
mass and symmetry charges.  Up to possible mass degeneracies, this
basis is unique and is characterized by the absence of quadratic
mixing terms between different mass eigenstates.  Furthermore, all
parameters of the Lagrangian in this basis, are physically observable,
in the sense that all masses and couplings can in principle be
obtained by a suitable experiment.  After having set the Lagrangian to
the mass eigenstates fields basis, one can then deploy the standard
QFT machinery and set the Feynman rules in order to calculate
transition amplitudes for any physical process.

Nevertheless, in the vast majority of the models we are interested in,
masses are typically generated or affected by a symmetry breaking
mechanism.  In this case another basis is physically meaningful as
well.  This is the basis where the Lagrangian exhibits explicitly the
properties of the initial symmetry, and the states correspond to
eigenstates of a larger symmetry group.  We will refer to these
eigenstates, for gauge bosons and collectively for fermions and
scalars with family replication, as {\sl gauge} and {\sl flavour
  eigenstates} respectively, although in our definition for the latter
there is no implicit requirement of an underlying flavour symmetry.
In this sense the flavour eigenstate basis in many models can be
considered in practice arbitrary, constrained only by the other
symmetries of the initial Lagrangian, {\sl i.e., gauge symmetry,
  supersymmetry, etc.}  The transformation from the initial basis to
the mass eigenstate basis, which is still the physical basis of the
theory, is performed with mass diagonalization involving unitary
transformations and field redefinitions, typically leaving a physical
imprint on the parameters of the mass eigenstate theory.  In the
Standard Model (SM)~\cite{Weinberg:1967tq,Glashow,Salam} this effect
is displayed in the gauge sector through the weak mixing angle and in
the fermion sector through the
CKM~\cite{Cabibbo:1963yz,Kobayashi:1973fv} and
PMNS~\cite{Pontecorvo:1957cp,Maki:1962mu} matrices of charged
currents.  However, even in this very successful model, the CKM or
PMNS parameters along with the fermion mass eigenvalues are
insufficient to determine unambiguously the flavour eigenstate basis.

Although the mass eigenstate basis of a perturbative QFT is the
natural basis for calculations of physical processes, some effects
typically related to the symmetries of the Lagrangian before symmetry
breaking are better understood in flavour basis.\footnote{Since in
  many cases the mass diagonalization of various sectors of the theory
  is independent of each other, one can also work in a mixed basis
  where some sectors are given in mass basis and others in flavour
  basis.  In what follows, the basis we work can be easily identified
  from the context.}  Therefore for a qualitative analysis of such
effects, it is often useful to have our expressions in the latter
basis.\footnote{This is after all the basis that more naturally
  connects couplings and masses to high energies through their
  Renormalization Group Equations (RGEs).  }  The standard strategy
that has been employed up to date, is an approximate diagrammatic
method commonly referred to as the {\sl Mass Insertion Approximation}
(MIA)\cite{Gabbiani:1996hi, Misiak:1997ei}.  In this approach one
defines the Feynman rules of the theory directly in flavour basis.
The diagonal part of the flavour mass matrix is typically absorbed
into the definition of (unphysical) massive propagators and the
non-diagonal part commonly referred to as {\sl mass insertions} is
treated perturbatively, as part of the interaction Lagrangian which
now possesses quadratic mixing terms.  Due to the presence of
quadratic interactions, besides the standard loop approximation of a
perturbative QFT, there is an extra approximation for each diagram,
appearing as an infinite series in terms of mass insertions, following
the presence of a flavour propagator.

In what follows, we present an {\sl algebraic} treatment of transition
amplitudes in mass eigenstate basis, leading directly to the
corresponding amplitudes in flavour basis, in the form of the MIA or
of an equivalent expansion.
In particular, we prove a theorem in matrix analysis~\cite{bhatia97,
  0521386322}, that we coin {\sl Flavour Expansion Theorem} or simply
{\sl FET}, which says that an analytic function of a Hermitian matrix
can be expanded polynomially in terms of its off-diagonal elements
with coefficients being the divided difference of the analytic
function and arguments the diagonal elements of the Hermitian matrix.
At one-loop level, this expansion is naturally related to the
remarkable recursive properties of next order Passarino-Veltman (PV)
function~\cite{Passarino:1978jh} being the {\sl divided
  difference}~\cite{de2005divided} of the previous one.
We then argue that FET connects mass and flavour field bases
amplitudes.
The first non-trivial order in the expansion [{\it cf.} \eq{theorem}],
and applications in MSSM flavour physics, have been presented
in~\Refs{Buras:1997ij,Giudice:2008uk} but a formal mathematical proof
to all orders was unknown until now.
FET is especially useful when is used to evaluate higher order
expansion terms, it is technically easier, elegant and superior to
often tedious, time-consuming and thus prone to mistakes calculations
of the diagrammatic MIA.  We support our claims with a novel example
towards the end of the article.

More specifically, the paper is organized as follows: In
Section~\ref{sec:calcs}, we present a warming-up example of a scalar
toy-model in order to illustrate the relation between the calculation
of flavour transition amplitudes in mass and flavour bases.  Then, in
Section~\ref{sec:fet}, we formulate a general algebraic theorem
dealing with the expansion of an analytic function of a Hermitian
matrix, and, discuss its applications to rewriting flavour amplitudes
with scalar and vector particles, from mass to flavour eigenstates
basis.  We extend our discussion to the case of amplitudes involving
fermions in Section~\ref{sec:ferexp}.  In Section~\ref{sec:nedm}, we
illustrate the developed technique on a (potentially) physical
example, expanding the dominant gluino-squark contribution to the
neutron Electric Dipole Moment and showing the importance of higher
order terms.  We conclude our results in Section~\ref{sec:summary}.
Finally, the formal proof of theorem formulated in
Section~\ref{sec:fet} is given in \ref{app:fet}, while in
\ref{app:pvconv} we derive the convergence criterion for the mass
insertion expansion of the one-loop integrals.

\section{A warming-up example: flavour calculation techniques}
\setcounter{equation}{0}
\label{sec:calcs}

To set up a simple framework to introduce the standard techniques of
flavour physics calculations, we consider a scalar field toy model
composed of $N$-complex charged scalar fields $\Phi_I$, with family
replication, and an extra neutral, real, scalar field $\eta$.  The
(squared) mass matrix, ${\bf M}^2$, and the Yukawa coupling matrix,
${\bf Y}$, of the flavour eigenstates $\Phi_I$, are necessarily
Hermitian but not aligned in general.  The Lagrangian density, will
have the form:\footnote{Sum over repeated indices is always assumed in
  the text, unless stated otherwise.}
\bea
\Lu_{\rm flavour} = (\partial^\mu \Phi_I^\dag )\,(\partial_\mu \Phi_I)
-M_{IJ}^2 \Phi_I^\dag \Phi_J + \,\frac{1}{2}(\partial^{\mu}\eta) \,
(\partial_{\mu}\eta) - \frac{1}{2} m_\eta^2 \eta^2 \, - Y_{IJ}\;\eta\,
\Phi_I^\dag\, \Phi_J - \ldots \;,
\label{Luflavour}
\eea
where dots denote additional scalar field interactions which are
irrelevant for the discussion below.  Using the unitary rotation,
\begin{equation}
\Phi_I =  U_{Ii}\;\phi_i\,,\,
\end{equation}
where $\mathbf{U}$ satisfies the condition
\begin{equation}
{\bf U^\dagger \, M^2 \, U} = {\bf m^2} = {\bf diag}(m_1^2,\ldots,m_N^2) \;,
\end{equation}
one can express the Lagrangian in terms of mass eigenstates $\phi_i$
\bea
\Lu_{\rm mass} = (\partial^\mu \phi_i^\dag )\,(\partial_\mu \phi_i) -
m_{i}^2\phi_i^\dag \phi_i + \,\frac{1}{2}(\partial^{\mu}\eta) \,
(\partial_{\mu}\eta) - \frac{1}{2}m_\eta^2 \eta^2
- y_{ij}\;\eta\, \phi_i^\dag\,\phi_j + \ldots \;,
\label{Lumass}
\eea
where the transformed scalar couplings are identified as $y_{ij} =
U_{iI}^{\dagger}\, Y_{IJ}\, U_{Jj}$.

\begin{figure}[t]
{\center{ \includegraphics[trim=0.5cm 4.cm 2cm 2.7cm, clip=true,
      totalheight=0.22\textwidth]{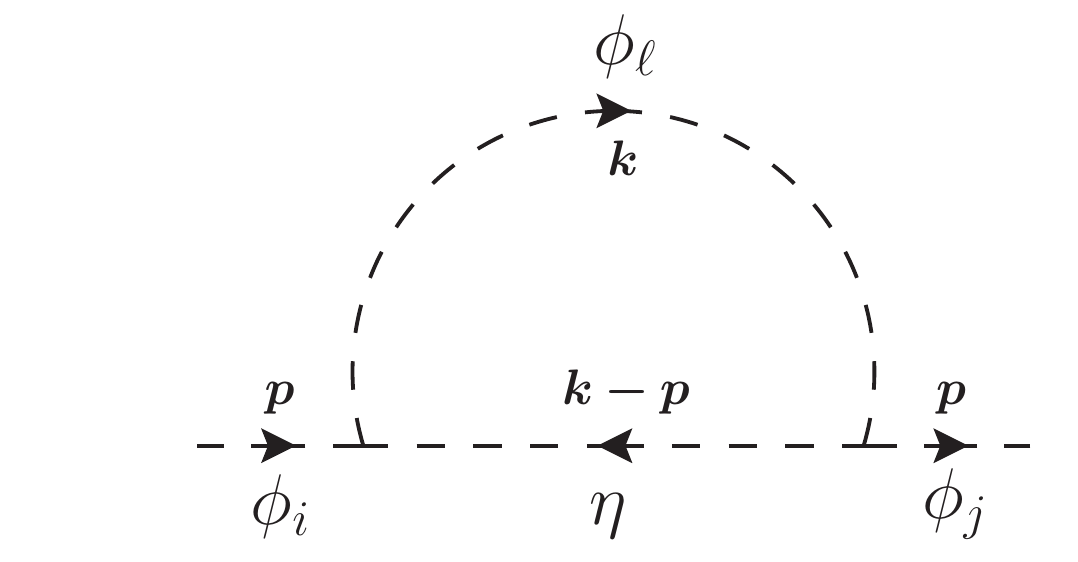}}
    \caption{Scalar self-energy \textit{$-i\Sigma_{ji}(p)$ } in the
      mass eigenstate basis.  }
    \label{fig:eigense}}    
\end{figure}
\begin{figure}[t]
\center{\includegraphics[trim=0.cm 5.4cm 0cm 1.cm, clip=true,
    totalheight=0.26\textwidth]{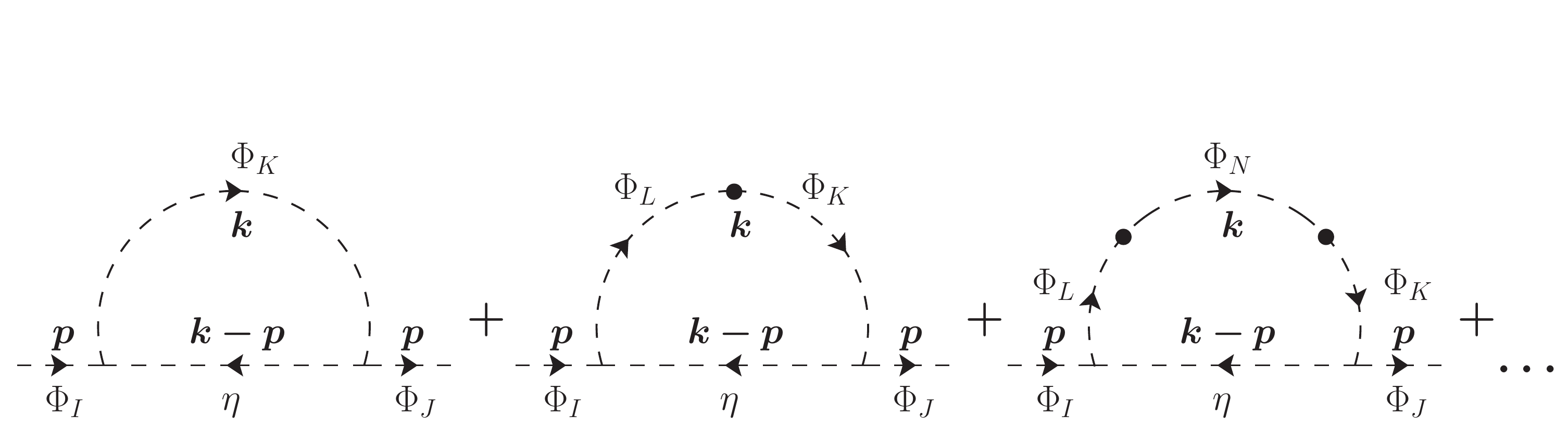} }
\caption{Scalar self-energy \textit{$-i\hat{\Sigma}_{JI}(p)$ } in the
  flavour basis.
\label{fig:miase}  } 
\end{figure}

First, let us consider the ``flavour'' changing one-loop
One-Particle-Irreducible (1PI) self-energy diagram of the mass
eigenstates fields $\phi_i$, shown in Fig.~\ref{fig:eigense}.  Then
using interactions from Lagrangian in \eq{Lumass} leads to
\begin{equation}
-i \Sigma_{ji}(p)\,=\, \int\frac{d^4k}{(2\pi)^4}\: y_{j\ell}\:
\frac{1}{(k^2 - m_\ell^2)((k-p)^2 - m_\eta^2 )} \: y_{\ell i} \ =
\ \frac{i}{(4\pi)^2} \: y_{j\ell} \: B_0(p;m_\ell^2,m_\eta^2) \:
y_{\ell i} \;,
\label{Sigmaji}
\end{equation}
where the loop function $B_{0}$ (and also $C_{0},D_{0},...$ below) is
a PV-function defined in Section~\ref{sec:pvfun}.

Next, we consider the corresponding one-loop diagram in flavour basis
of \eq{Luflavour}, employing the standard diagrammatic MIA approach.
The massive flavour propagators are defined by absorbing the diagonal
part of the flavour mass matrix ${\bf M}^2$, according to the
decomposition into diagonal and non-diagonal parts,
\bea
M^2_{IJ} \ = \ (M_0^2)_{II} \: \delta_{IJ} +  \hat{M}^2_{IJ}\,,\qquad
\hat{M}^2_{II}=0\, ,\;\emph{ (no sum over I)}\label{M^2} \;,
\eea 
where $\delta_{IJ}$ is the usual Kronecker symmetric tensor.  The
non-diagonal elements $\hat{ M}^2_{IJ}$, are identified as the mass
insertions, treated as perturbative couplings for the non-diagonal
quadratic interactions of flavour fields.  The one-loop flavour
changing self-energy of the flavour states, $\Phi_I$, is thus
represented as the infinite sum of the diagrams shown in
Fig~\ref{fig:miase}.  By direct calculation, and denoting $M^2_K\equiv
(M_0^2)_{KK}$, one obtains
\bea
-i \hat{\Sigma}_{JI}(p) & = &\int\frac{d^4k}{(2\pi)^4} \: { Y_{JK}
  Y_{LI} \over (k-p)^2 - m_\eta^2}\times \Bigg({\delta_{KL} \over k^2
  - M_K^2}\ +\ {\hat{M}_{KL}^2 \over (k^2 - M_K^2)(k^2 - M_L^2)}
\nonumber\\
&+& {\hat{M}_{KN}^2 \hat{M}_{NL}^2 \over (k^2 - M_K^2)(k^2 -
  M_N^2)(k^2 - M_L^2)} \ +\ \ldots\Bigg)\label{eq:miase0}\\
& = &\, \frac{i}{(4\pi^2)} \: Y_{JK} Y_{LI} \times \Bigg(\delta_{KL}
B_0(p;M_K^2,m_\eta^2)\ +\ \hat{M}_{KL}^2 C_0(0,p;M_K^2,M_L^2,m_\eta^2)
\nonumber\\
&+& \hat{M}_{KN}^2\hat{M}_{NL}^2 D_0(0,0,p;M_K^2,M_N^2,M_L^2,m_\eta^2)
\ +\ \ldots\Bigg)\;,
\label{eq:miase}
\eea
which is essentially an expansion in terms of mass insertions.  We
should notice that although $M_K^2$ are not the squares of the
physical masses, they are always real non-negative.  This is a general
property of the diagonal part of any semi-positive definite Hermitian
matrix, including also any Hermitian (squared) mass matrix of a
consistent QFT.

As the external indices imply, the self energy diagrams in this
example are not invariant under flavour rotations.  One can formally
uncover the explicit correspondence between the flavour and mass basis
calculations by considering the flavour invariance of the time
evolution operator, inside the corresponding $S$-scattering matrix
element, with the relevant contractions
\bea 
\int \hspace{-0.15cm}d^4x \hspace{-0.15cm}\int \hspace{-0.15cm}
d^4y\; \begC1{\bra{p,J}} \endC1{\Phi}_{J}^\dag (x) \hat{\Ou}^{JI}_{(x,y)}
\begC1{\Phi_{I}(y)}\endC1{\ket{p',I}} &=&
\int \hspace{-0.15cm}d^4x \hspace{-0.15cm}\int \hspace{-0.15cm}
d^4y\;\begC1{\bra{p,J}} \endC1{\phi}_{j}^\dag (x){\Ou}^{ji}_{(x,y)}
\begC1{\phi_{i}(y)}\endC1{\ket{p',I}} \;,\nn
\begC1{\bra{p,J}}\endC1{\phi}_{j}^\dag(x) = \bra{0}
e^{ipx}\;U_{Jj} &,& \begC1{\phi_{i}(y)}\endC1{\ket{p',I}} =
U^\dag_{iI}e^{-ip'y}\ket{0}\;,
 \label{JIelement}
\eea 
we derive the transformation rule for self-energies,
\bea
 \hat{\Sigma}_{JI}(p) = U_{Jj}\; \Sigma_{ji}(p) \;U^\dag_{iI}\label{SigmaU}\;,
\eea 
which can be immediately generalized to the case of an arbitrary
$n$-point amplitude.

Substituting in \eq{SigmaU} the explicit algebraic expressions for the
self-energies we obtain an interesting result - the flavour rotation
of the mass eigenstates loop-function is an expansion in terms of mass
insertions in flavour basis (no sum over $K,L$),
\bea
U_{K\ell}\; B_0(p,m_\ell^2,m_\eta^2)\; U^\dag_{\ell L}& =& \delta_{KL}
B_0(p;M_K^2,m_\eta^2)\ +\ \hat{M}_{KL}^2 C_0(0,p;M_K^2,M_L^2,m_\eta^2)
\nn
&+& \hat{M}_{KN}^2\hat{M}_{NL}^2 D_0(0,0,p;M_K^2,M_N^2,M_L^2,m_\eta^2)
\ + \ldots \;.
\label{eq:B0U}
\eea
This result, however, can be also obtained by a theorem of matrix
analysis [{\it cf.}  \eq{theorem}] stated in the next section,
rendering diagrammatic calculations in flavour basis, similar to ones
leading to \eq{eq:miase}, obsolete.

\section{Flavour Expansion Theorem}
\setcounter{equation}{0}
\label{sec:fet}

\Eq{eq:B0U} has been obtained diagrammatically with the help of the
transformation rule \eqref{SigmaU}.  In what follows we show that,
such relations can be also obtained purely algebraically, allowing for
an easier transformation between mass and flavour basis calculations
without the use of the diagrammatic MIA.  In this section we formulate
a relevant mathematical framework and a useful general theorem of
matrix analysis.  For brevity we refer to it as ``Flavour Expansion
Theorem'' or just FET.

Before formulating FET, it is worth noting that obtaining the
relation~(\ref{eq:B0U}) in a closed form without reverting to
diagrammatic MIA expansion is not easy with the use of standard
perturbation techniques.  The simplest idea of expanding the mass
eigenstates result in a Taylor series around some average mass $m_0^2
= \frac{1}{N}\sum_{K=1}^N m_K^2$,
\bea
U_{iK} \, f(m_K^2) \, U_{jK}^{\star} &=& U_{iK} \, \left[\,
\sum_{n=0}^\infty \frac{f^{(n)}(m_0^2)}{n!} \, (m_K^2 - 
m_0^2)^n \right]\, U_{jK}^{\star}\, \,,
\label{eq:eigexp}
\eea
fails to reproduce correctly even the first non-trivial MIA term
in~\eq{eq:B0U} - the higher terms of {\em any} order in Taylor
expansion are proportional to a factor
\bea 
U_{iK}\, (m_K^2 - m_0^2)^n \,
U_{jK}^{\star} \,=\, \left[ \left(\mathbf{M^2} -
m_0^2\mathbf{I}\right)^{n}\right]_{ij} \,=\, \left[ \left(\mathbf{M_0^2}
- m_0^2\mathbf{I} + \mathbf{\hat M^2}\right)^n\right]_{ij} \;,
\eea
which after expanding would explicitly contain terms linear in $\hat
M^2_{ij}$.

Alternatively, a more consistent approach can be developed using the
standard quantum mechanic perturbation theory, applied to mass matrix
eigenstate problem,
\bea
\left(\mathbf{M_0^2} + \lambda \mathbf{\hat M^2}\right) \, \left(\mathbf{e_I^{(0)}} +
\lambda \mathbf{e_I^{(1)}} + \ldots\right) \ = \ \left[(m^2)^{(0)}_I + \lambda
(m^2)^{(1)}_I + \ldots \right]\, \left(\mathbf{e_I^{(0)}}
+ \lambda \mathbf{e_I^{(1)}} +
\ldots \right)\;,
\label{eq:qmpert}
\eea
with $\lambda$ being the expansion parameter.  By solving
\eq{eq:qmpert} order by order we find mass eigenvalues $m_I^2$ and
rotation matrices $\mathbf{U}=(\mathbf{e_1},\ldots,\mathbf{e_N})$ as a
series in $\lambda$.  Then the product, $U_{iK}\, f(m_K^2)
\,U_{fK}^{\star}$, can again be expanded in Taylor series in $\lambda$
parameter, with each term equivalent to the same order of MIA
expansion.  Such procedure easily restores the first terms
in~\eq{eq:B0U} (see e.g.~\cite{Crivellin:2010gw}), but its
combinatorial complication grows quickly and again it is hard to see
how the higher order terms combine to form compact $n$-point loop
functions, a situation which becomes even trickier in case of
degenerate eigenvalues.

\subsection{Hermitian matrix function and Divided Differences}
\label{sec:math}

We append below definitions that are relevant for presenting the
Flavour Expansion Theorem.

\begin{def1}
{Let $\mathbf{A}$ be an $n\times n$ Hermitian matrix, diagonalized by
  a unitary matrix $\mathbf{U}$ to a real diagonal matrix
  $\mathbf{D}$, through the transformation,
\bea
\mathbf{U^\dag A U} = \mathbf{D} =
\mathbf{diag}(\lambda_1,\dots,\lambda_n)\, .
\eea
Also let $f(x)$ be a real analytic function about zero, in a range
$I\subseteq \mathbb{R}$, that can be expressed in terms of its
Maclaurin series, as
\bea
f(x)=\sum_{m=0}^{\infty}c_m x^m\;.
\eea 
Then, if all $\lambda_i \in I$, one can define a Hermitian matrix
function $f({\bf A})$, as
\bea
\mathbf{U} f(\mathbf{D}) \mathbf{U^\dag}\ = \ \sum_{m=0}^{\infty}
\ c_m \mathbf{U} \mathbf{D}^m \mathbf{U^\dag} \ = \ \sum c_m
\mathbf{A}^m \equiv f (\mathbf{A})\ .
\label{eq:hermfun}
\eea
\label{def1}}
\end{def1}

\begin{def1}
For any function $f(x)$, one can define a set of multi-variable
functions, $f^{[k]}$, through a mathematical operation commonly
referred to as divided difference.  Divided differences are defined
recursively as
\begin{subequations}
\begin{align}
f^{[0]}(x)&\equiv f(x)\: \;,\\
f^{[1]}(x_0,x_1)&\equiv\frac{f(x_0)-f(x_1)}{x_0-x_1}\: \;, \\
\ldots\nonumber\\
f^{[k+1]}(x_0,\dots ,x_k,x_{k+1})&\equiv\frac{f^{[k]}(x_0,\dots ,
  ,x_{k-1},x_k)-f^{[k]}(x_0,\dots ,x_{k-1},x_{k+1})}{x_k-x_{k+1}} \;.
\end{align}
\label{dddef}
\end{subequations}
\label{def2}
\end{def1}
It is easy to check that divided differences of any order $k$ are
always totally symmetric under the permutation of any set of
respective arguments, $x_i$.  Moreover, for an analytic generating
functional $f=f^{[0]}$ they also have a well defined degeneracy limit
\begin{align}
\lim_{\{ x_0,\dots ,x_m\} \to \{ \xi,\dots,\xi \}}
f^{[k]}(x_0,\dots,x_k)=\frac{1}{m!}\frac{\partial^{m} }{\partial
  \xi^{m}}f^{[k-m]}(\xi,x_{m+1}\dots,x_k)\;,
\label{ddlimit}
\end{align}
applied to any set of respective arguments ($m\leq k$), as long as all
arguments lie in the domain of analyticity of $f(x)$.

\subsection{The Flavour Expansion Theorem (FET)}
\label{sec:fetdef}

By making extensive use of the definitions above, we can formulate a
general theorem, concerning a certain expansion of Hermitian matrix
functions, in a form which can be directly applied to the unitary
transformation of loop functions in flavour physics.
\begin{thm}
{Let $\bf{A}$ be any $n\times n$ Hermitian matrix, decomposed as a sum
  of its diagonal and non-diagonal part,}
\begin{equation}
\mathbf{A}\,=\,\mathbf{A_0}\,+\,\mathbf{\hat{A}}\, ,\label{Ageneral}
\end{equation}
{{where, by definition,}}
\begin{equation}
\begin{array}{l}
A_0^I\,\equiv A_{II}\;,\\[3mm]
\hat{A}_{IJ}\,\equiv\,A_{IJ},\qquad\hat{A}_{II}=0\;, \hspace{1cm}
(I,J=1,\dots,n)\;.
\end{array}\label{Adiag}
\end{equation}
Then, for any Hermitian matrix function $f({\bf {A}})$, satisfying the
  restrictions of Def.~\ref{def1}, a given matrix element $\bra I
  f(A) \ket J $ will be given by the expansion (no sum over I,J)
\begin{align}
f({\bf A})_{IJ} &= \delta_{IJ}f(A^I_0)\ +
f^{[1]}(A_0^I,A_0^J)\:\hat{A}_{IJ}+ \sum_{K_1}
f^{[2]}(A_0^I,A_0^J,A_0^{K_1})\:
\hat{A}_{I{K_1}}\hat{A}_{{K_1}J}\nonumber\\
& + \sum_{K_1,K_2}f^{[3]}(A_0^I,A_0^J,A_0^{K_1},A_0^{K_2})\:
\hat{A}_{I{K_1}}\hat{A}_{{K_1}{K_2}}\hat{A}_{{K_2}J} + \dots \;\;,
\label{theorem}
\end{align}
in terms of divided differences of $f(\bf{A_0})$ [see Def.~\ref{def2}]
  and the (non-diagonal) elements of~$\bf{\hat{A}}$.\footnote{Cases of
  degenerate eigenvalues and/or diagonal matrix elements are treated
  uniformly due to property \eqref{ddlimit} of the divided
  differences.}
\end{thm}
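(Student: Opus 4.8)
The plan is to start from the Maclaurin representation $f(\mathbf{A})=\sum_{m=0}^{\infty}c_m\,\mathbf{A}^m$ guaranteed by Def.~\ref{def1}, substitute the splitting $\mathbf{A}=\mathbf{A_0}+\mathbf{\hat{A}}$, and reorganise the result as a series in the number $p$ of off-diagonal factors. Since $\mathbf{A_0}$ is diagonal, the basic observation is that in the noncommutative expansion of $(\mathbf{A_0}+\mathbf{\hat{A}})^m$ every term is an alternating word $\mathbf{A_0}^{a_0}\mathbf{\hat{A}}\,\mathbf{A_0}^{a_1}\mathbf{\hat{A}}\cdots\mathbf{\hat{A}}\,\mathbf{A_0}^{a_p}$ containing exactly $p$ letters $\mathbf{\hat{A}}$ and $m-p$ letters $\mathbf{A_0}$. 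Taking the $(I,J)$ entry collapses each diagonal block to a power of a single eigenvalue and produces a sum over internal indices $K_1,\dots,K_{p-1}$ multiplying the chain $\hat{A}_{IK_1}\hat{A}_{K_1K_2}\cdots\hat{A}_{K_{p-1}J}$. First I would fix $p$ and sum over $m$ and over the compositions $a_0+\cdots+a_p=m-p$, isolating the scalar coefficient attached to a given chain.

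The decisive algebraic fact is that, once summed over $m$, this scalar coefficient is precisely a divided difference. The key lemma is that the divided difference of a single monomial equals the complete homogeneous symmetric polynomial $h_d$,
\begin{equation}
(x^m)^{[p]}(x_0,\dots,x_p)=\sum_{a_0+\cdots+a_p=m-p}x_0^{a_0}x_1^{a_1}\cdots x_p^{a_p}\equiv h_{m-p}(x_0,\dots,x_p)\,,
\end{equation}
with $h_d\equiv 0$ for $d<0$. I would prove this by induction on $p$ from the recursion~\eqref{dddef}; the cleanest route is the generating identity $\sum_{d\ge 0}h_d(x_0,\dots,x_p)\,t^d=\prod_{i=0}^{p}(1-x_i t)^{-1}$, from which a one-line computation shows that forming the divided difference in any two arguments multiplies the generating function by $t$, lowering the homogeneous degree by one — exactly the shift $m\mapsto m-p$ demanded by the degree count. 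Multiplying by $c_m$ and summing over $m$ then converts the coefficient into $\sum_m c_m\,(x^m)^{[p]}=f^{[p]}(A_0^I,A_0^{K_1},\dots,A_0^{K_{p-1}},A_0^J)$, by linearity of the divided difference in its generating function $f$.

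Assembling these pieces, the contribution with $p$ insertions reads $\sum_{K_1,\dots,K_{p-1}}f^{[p]}(A_0^I,A_0^{K_1},\dots,A_0^{K_{p-1}},A_0^J)\,\hat{A}_{IK_1}\cdots\hat{A}_{K_{p-1}J}$, and the cases $p=0,1,2,3$ reproduce the four displayed terms of~\eqref{theorem} once the total symmetry of $f^{[p]}$ is used to reorder its arguments into the form $(A_0^I,A_0^J,A_0^{K_1},\dots)$ written there. The degeneracy footnote is then automatic: because each monomial divided difference is the manifestly polynomial symmetric function $h_{m-p}$, no spurious $1/(A_0^I-A_0^J)$ singularities ever appear, so coincident diagonal entries are covered by the limit~\eqref{ddlimit} with no separate argument.

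I expect the main obstacle to be analytic rather than combinatorial: one must justify interchanging the three summations — over $m$, over $p$, and over the internal indices $K_i$ — and the attendant rearrangement of the resulting double series. Because $f$ is analytic on a neighbourhood of the spectrum and the entries of $\mathbf{\hat{A}}$ are fixed finite numbers, all the sums converge absolutely inside the radius of convergence, so a Fubini-type rearrangement is legitimate; making this precise — ideally with an explicit bound on the size of $\mathbf{\hat{A}}$ guaranteeing that the off-diagonal series stays within the domain where $\sum_m c_m\mathbf{A}^m$ converges — is the one step requiring genuine care, and is presumably where the convergence criterion of~\ref{app:pvconv} enters.
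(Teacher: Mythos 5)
Your proposal is correct and essentially reproduces the paper's own proof: your complete homogeneous symmetric polynomials $h_{m-p}$ are exactly the paper's ``fully symmetrized polynomials'' $Q^{p+1}_{m-p}$, your key identity $(x^m)^{[p]}=h_{m-p}$ is exactly the paper's Lemma, and the reorganisation of the Maclaurin series of $f(\mathbf{A_0}+\mathbf{\hat{A}})$ by the number of off-diagonal factors is the same. The only differences are cosmetic---you prove the lemma via the generating function $\prod_i (1-x_i t)^{-1}$ rather than the paper's telescoping induction from the factorization property, and you pass from monomials to $f$ by linearity of the divided difference rather than by induction on the expansion order---while convergence is handled identically in both, namely by assumption (the theorem is stated to hold only when the right-hand side converges).
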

Eq.~\ref{theorem} holds as long as the expansion in the RHS is
convergent.  The formal proof of this theorem, based on the notion of
fully symmetrized polynomials and mathematical induction, is given
in~\ref{app:fet}.

\subsection{Divided Differences and Passarino-Veltman  functions}
\label{sec:pvfun}

The natural connection between FET and the expansion of one-loop
amplitudes is becoming striking when looking into the recursive
properties of the loop functions in the Passarino-Veltman
basis~\cite{Passarino:1978jh}.  The general $n$-point one-loop
functions can be defined as:
\begin{align}
&\frac{i}{(4\pi)^2}PV_n^{\mu_1\ldots\mu_l}(p_1, \dots,
    p_{n-1};m_1^2, \dots,m_{n}^2) = \nonumber \\[2mm]
&\hskip 30mm \int \frac{d^4k}{(2\pi)^4}\frac{k^{\mu_1}\ldots
    k^{\mu_l}}{(k^2-m_1^2)\;\prod_{j=2}^{n} ((k + p_1 + \dots +
    p_{j-1})^2-m_{j}^2)}\;, \quad(n\geq 2)\;.
\label{PVn}
\end{align}
In the standard notation $n=2,3\ldots$ functions are commonly denoted
as $B,C,\ldots$--functions.

A useful property associates differences of integral functions of a
certain order with integral functions of next order.  In general case
this relation has the following structure:
\bea
&& {PV_n^X(p_1\dots p_{n-1}; m_1^2\dots m_{n}^2) - PV_n^X(p_{1}\dots
     p_{n-1}; {m'_1}^2\dots m_{n}^2) \over m_1^2-{m'_1}^2} \nonumber\\
& & \hskip 55mm = PV_{n+1}^X(0,p_{1}\dots p_{n-1}; m_1^2,{m'_1}^2\dots
m_{n}^2) \;, \label{lgeq2}\\[4mm]
&& {PV_n^X(\dots p_{j-1}\dots;\dots m_j^2\dots ) - PV_n^X(\dots
     p_{j-1}\dots;\dots {m'_j}^2\dots) \over m_j^2-{m'_j}^2}
\nonumber\\
&& \hskip 55mm = PV_{n+1}^X(\dots p_{j-1},0\dots ;\dots
m_j^2,{m'_j}^2\dots ),\;\;\; (j\geq 2)\nonumber
\eea
with $X$ being any set of Lorentz indices of momenta in the numerator
of loop integrand.\footnote{Additional discussion and more recursive
  relations for the various types of PV functions can be found in
  Appendix A of ref.~\cite{Dedes:2014asa}.}

Comparing \eq{lgeq2} with the definition~(\ref{dddef}) one can see
immediately that the notion of divided differences is \emph{naturally}
implemented in the relations between multi-point one-loop integrals.
Eq.~(\ref{lgeq2}) allow us to express the expansion~\eqref{theorem}
for one-loop amplitudes in a form in which it is obvious that it is
not singular for degenerate diagonal matrix elements.  Namely, every
one-loop amplitude can be written as a linear combination of PV
functions.  Furthermore, each PV function can be expanded as,
\bea
\left[\:PV^{(n)}(\ldots,A,\ldots)\:\right]_{IJ} & = &
\delta_{IJ}PV^{(n)}(\ldots,A^I_0,\ldots)\ +
\ PV^{(n+1)}(\ldots,A^I_0,A_0^J,\ldots)\hat{A}_{IJ} \nonumber \\
& + & \sum_{K}PV^{(n+2)}(\ldots,A^I_0,A_0^J,A_0^K,\ldots)\hat{A}_{IK}
\hat{A}_{KJ}\ + \ \dots \;,
\label{theoremPV}
\eea
where if necessary one should also specify momenta arguments as
prescribed in \eq{lgeq2}.

For example, to make a connection between FET and the toy-model of the
previous section we observe that if we make the following
identifications
\bea
\quad {\bf{D}} \rightarrow {\bf{m}^2}\;,\quad {\bf{A}} \rightarrow
      {\bf{M}^2}\;,\quad f(x)\equiv f^{[0]}(x) \rightarrow
      B_0(p,x,m_\eta^2)\;,
\eea
we can immediately see that~\eq{eq:B0U} is a special case
of~\eq{theorem}.  In Section~\ref{sec:nedm}, we will present a highly
non-trivial example application of the FET.

\subsection{Applications and limitations of FET}
\label{sec:fetlimits}

The FET formulated as a pure algebraic theorem can be directly applied
to expanding a mass eigenstate result of any transition amplitude in
any model involving particles associated with Hermitian mass matrices,
that is scalars or vectors, even at higher loop orders.  As we shall
discuss in Section~\ref{sec:ferexp}, with some modifications, FET can
be also applied to expanding the amplitudes involving fermions
associated with non-Hermitian mass matrices.

The purely algebraic expansion is usually significantly simpler than
the more tedious and prone to mistakes diagrammatic MIA calculation,
particularly in models with complicated flavour structure like MSSM.
Another advantage of FET is that it can be easily implemented as an
algorithm for symbolic manipulation programs, automatizing the
expansion procedure.  However, the procedure has some limitations,
particularly when is applied to such complicated functions as loop
integrals.  Three remarks concerning limitations of FET are summarized
here:

\begin{rem} FET assumptions require the expanded amplitude to be analytic
  function of masses.  This is not the case if external momenta are
  large and loop integrals may have branch cuts.  For the external
  momenta in the vicinity of branch cuts a mass eigenstate calculation
  and use of numerical procedures is more appropriate.
\end{rem}
\begin{rem} Flavour expansion in the r.h.s of \eq{theorem} 
 may not converge or converge very slowly, in both cases again mass
 eigenstates basis and use of numerical procedures is preferred.
\end{rem}
\begin{rem} The UV-singularities do not pose a problem for FET.  If they
  appear, they come from the loop integrals of positive mass
  dimension.  The coefficients of poles of such integrals are
  dimensionless and flavour blind or proportional to positive powers
  of masses, so they can be evaluated in terms of flavour basis
  parameters without any expansion.
\end{rem}

Most practical applications of FET concern analyses of models of New
Physics where loop particles are much heavier than the external states
(being usually the Standard Model fields).  Thus, it is usually
sufficient to calculate relevant amplitudes in the approximation of
vanishing external momenta, or by expanding the loop integrals in the
external momenta and keeping only the first few terms of such an
expansion.  Since in these processes the loop integrals are real
analytic functions of masses, branch cuts can never appear.  Then, the
only remaining problem is the convergence of the FET.

In~\ref{app:pvconv}, we formulate and prove the condition which has to
be fulfilled by the mass matrices in the flavour basis, in order to
make FET written for any one-loop amplitude, convergent.  The result
is that the moduli of every eigenvalue of the dimensionless mass
insertion matrix has to be smaller than one.

\section{Expansion of  fermionic amplitudes}
\setcounter{equation}{0}
\label{sec:ferexp}

Expanding amplitudes in which flavour violation enters through
fermionic mass matrices is more complicated, because such matrices do
not need to be Hermitian and in general can be diagonalized with the
use of two different unitary matrices.  Nevertheless, as it turns out,
FET can always apply to this case as well, with minor, but necessary,
modifications which we discuss below.
 
Lets first consider a Lagrangian of $N$-copies of Dirac fermion free
fields.  This will have the general form,
\bea
\Lu^{(0)}_{\rm flavour} &=&i \bar\Psi_A\slashed{\partial} \Psi_A -
M_{AB} \big(\bar\Psi_A P_L\Psi_{B}\big) - M^\dag_{AB} \big(\bar\Psi_A
P_R \Psi_{B}\big)\label{eq:Lfermion}\\
& \equiv & \mathbf{\bar\Psi}\Big(i\slashed{\partial} -\mathbf{M} P_L -
\mathbf{M^\dag} P_R\Big) \mathbf{\Psi} \;,\label{eq:LfermionMat}
\eea
in a 4-spinor Dirac notation, which is more suitable for mass
eigenstates calculations and offers a more compact description in our
following discussion.  Since Majorana spinors can be understood as
Dirac spinors with an extra chirality constraint, our discussion
applies directly to Majorana fermions, as well.

As is well known, in a chiral theory, a Dirac spinor is in general
reducible under flavour rotations.  The transformation from flavour to
mass basis is performed through two different unitary matrices, acting
independently on its chiral projections, as
\bea
{\Psi_{L}}_A = U_{Ai}\psi_{Li}\;, \qquad {\Psi_{R}}_A =
V_{Ai}\psi_{Ri}\label{UpsiLR} \;,
\eea
which can always bring an arbitrary complex flavour mass matrix ${\bf
  M}$ into a real non-negative diagonal form, satisfying
\bea
\mathbf{V^\dagger\, M\, U}\ = \ \mathbf{m} \ = \ \mathbf{diag}
(m_1,\ldots, m_N)\,.
\eea
The unitary matrices $\mathbf{V}$ and $\mathbf{U}$ diagonalize also
the (semi) positive-definite Hermitian matrices ${\bf MM^\dag}$ and
${\bf M^\dag M}$, through the transformations
\bea
{\bf V^\dagger\, M\: M^\dagger\, V} \ = \ {\bf U^\dagger\, M^\dagger\: M\, U} \ = \ {\bf
  m^2}\,.   
\label{eq:uvdef}
\eea

To streamline the notation, we introduce the unitary matrices
$\bm{{\cal U}}$ and $\bm{{\cal \bar U}}$, generalizing our
transformation rules for chirality projected fermion fields to a
reducible Dirac 4-spinor, as
\bea 
\bm{{\cal U}} \equiv \mathbf{U} P_L + \mathbf{V} P_R \;,\qquad
 \bm{{\cal \bar U}} \equiv \mathbf{U^\dag} P_R +
\mathbf{V^\dag} P_L\;.
\label{Udirac}
\eea
In this compact description, \eq{UpsiLR} will result in
\bea 
\Psi_A = {\cal U}_{Ai}\psi_i,\qquad \bar\Psi_A = \bar\psi_i\;
\bar{\cal U}_{iA}\;.
\label{Udiracf}
\eea
The free propagator for the fermion multiplet ${\bf\Psi}$ in flavour
basis is a matrix both in spinor and flavour space.  Inverting the
Dirac operator in~\eq{eq:LfermionMat}, we find
\bea
\mathbf{\hat\Delta}(k) &=& \frac{i}{\slashed k -\mathbf{M} P_L -
  \mathbf{M^\dag} P_R} \nonumber \\[2mm]
&=& (\mathbf{M^\dag}P_L+\slashed{k} P_L) \frac{i}{ k^2 -\mathbf{M
    M^\dag}}+(\mathbf{M}P_R + \slashed{k} P_R) \frac{i}{ k^2
  -\mathbf{M^\dag M}}\label{mat-prop}\, .
\eea

The free propagators in flavour and mass eigenstates basis are related
by the same rotations as fermion fields.  From the identity,
\bea 
\bra{0}T\{\Psi_B(x) \bar\Psi_A(y)\}\ket{0} &=& {\cal U}_{Bi}\,
\bra{0}T\{\psi_i(x)\bar\psi_i(y)\} \ket{0} \, \bar{\cal U}_{iA} \;,
\eea
it follows that, 
\bea 
\Big({\bm{\hat{\Delta}}}(k) \Big)_{BA} = \Big({\bm{ \mathcal
    U}}\;\bm{\Delta}(k)\;{\bm{\bar{\mathcal{U}}}}\Big)_{BA}&=&\left(
\;{\bm{ \mathcal U}}\; {i\, (\slashed{k} +  \mathbf{m}) \over k^2-{\bf
    m^2}}{\;\bm{\bar{\mathcal{U}}}} \; \right)_{BA} \;,
\label{eq:proptrans}
\eea
where $\Delta_{i}(k)$ is the fermion propagator in the mass
eigenstates basis.  Applying the explicit expressions of \eq{Udirac}
for the reducible flavour rotation matrices, to \eq{eq:proptrans} and
using the following algebraic identities:
\bea
\bar{\bm{\mathcal
    U}}^\dag\; {1 \over k^2 - \mathbf{m^2}}\;
\bar{\bm{ \mathcal
    U}} &=& \frac{1}{k^2 -\mathbf{M M^\dag}} P_L +
\frac{1}{ k^2 -\mathbf{M^\dag M}}P_R \;,  \\[2mm]
\bm{ \mathcal
    U}\;{\mathbf{m} \over k^2 - \mathbf{m^2}} \;
\bar{\bm{ \mathcal
    U}} &=& \mathbf{M^\dag} \frac{1}{ k^2 -\mathbf{M
    M^\dag}}P_L+\mathbf{M} \frac{1}{ k^2 -\mathbf{M^\dag M}}P_R \; ,
\label{propdecomp}
\eea
the flavour propagator $\bm{\hat{\Delta}}(k)$ can be also obtained
from the mass basis propagator $\bm{\Delta}(k)$.

In order to calculate the amplitude, apart from propagators one needs
to consider the transformation rules for the fermionic vertices.  A
general fermionic current in flavour basis, can be expressed in the
form,
\bea
j_\Psi = \bar\Psi_A \hat\Gamma_{AB} \Psi_B \;,
\eea
where $\mathbf{\hat\Gamma}$ is an operator acting both in flavour and
spinor space, and may also depend on scalar or gauge fields.  In any
QFT model a general fermionic vertex can be decomposed into four
chirality projected terms as
\bea
\bm{\hat \Gamma} & = & \, \bm{\hat \Gamma}_{RL} \,P_L + \,\bm{\hat
  \Gamma}_{LR} \,P_R + \, \bm{\hat \Gamma}_{RR}\, P_R + \,\bm{\hat
  \Gamma}_{LL} \,P_L \; \nn
&\equiv & \,P_L\,\bm{\hat \Gamma}_{RL} + P_R\,\bm{\hat \Gamma}_{LR} \,
+ \, P_L\, \bm{\hat \Gamma}_{RR}\, + P_R \,\bm{\hat \Gamma}_{LL} \,
\;.\label{eq:gammadecomp}
\eea
where $\bm{\hat\Gamma}_{LR(RL)}$ are scalar- or tensor-type couplings,
and $\bm{\hat\Gamma}_{LL(RR)}$ are vector couplings.

In our compact notation, the transformation rule for vertices can be
simply expressed as:
\bea
\mathbf{\hat\Gamma} = \bm{\bar{\mathcal{U}}}^\dag \; \mathbf{\Gamma}
\; \bm{{\mathcal{U}}}^\dag \;,
\label{eq:gamtrans}
\eea
or explicitly in terms of $\bm{\hat{\Gamma}}$ components as
\begin{align}
& \mathbf{\hat\Gamma}_{RL} = \mathbf{V}\; \mathbf{\Gamma}_{RL} \;
  \mathbf{U}^\dag \;, \qquad & \mathbf{\hat\Gamma}_{RR} = \mathbf{V}\;
  \mathbf{\Gamma}_{RR} \; \mathbf{V}^\dag\;, \nn
& \mathbf{\hat\Gamma}_{LR} = \mathbf{U}\; \mathbf{\Gamma}_{LR} \;
  \mathbf{V}^\dag\;, \qquad & \mathbf{\hat\Gamma}_{LL} = \mathbf{U}\;
  \mathbf{\Gamma}_{LL} \; \mathbf{U}^\dag\;.
\end{align}
It is important to notice that the transformation rules for
propagators and for vertices \eqs{eq:proptrans}{eq:gamtrans},
respectively, are different, which reflects the general difference in
transformation rules for amputated and non-amputated Green's
functions.

Let us examine now the general $n$-point transition amplitude with
fermion lines (external or internal - our formalism applies to the
latter by setting final and initial fermion indices to be equal).
Lets focus on any chosen fermion line in such an amplitude.  The
Feynman rule in the mass eigenstate basis would have the general form,
\bea
{\cal M}_{ji}\sim \left(\mathbf{\Gamma}\;\bm{\Delta}\; \mathbf{\Gamma}
\; \ldots \bm{\Delta}\; \mathbf{\Gamma} \right)_{ji}\;.
\label{eq:mfer}
\eea

Applying flavour rotation to the external states and using
\eqs{eq:proptrans}{eq:gamtrans}), one can get an expression for the
fermion line in flavour basis, $\hat {\cal M}_{JI},$ written as a
sequential product of flavour-basis fermion vertices and matrix
propagators,
\bea 
\bm{\hat{\mathcal{M\,}}} &=& \bm{\bar{\mathcal{U}}}^\dag
\bm{{\mathcal{M\,}}}\;\bm{{\mathcal{U}}}^\dag
\sim \Big(\bm{\bar{\mathcal{U}}}^\dag {\bf
  \Gamma\,}\bm{{\mathcal{U}}}^\dag \Big) \Big(\bm{{\mathcal{U}}} {\bf
  \Delta}\,\bm{\bar{\mathcal{U}}}\Big)
\Big(\bm{\bar{\mathcal{U}}}^\dag {\bf \Gamma\,}\bm{{\mathcal{U}}}^\dag
\Big) \ldots \Big(\bm{{\mathcal{U}}} {\bf
  \Delta}\,\bm{\bar{\mathcal{U}}}\Big)
\Big(\bm{\bar{\mathcal{U}}}^\dag {\bf \Gamma\,}\bm{{\mathcal{U}}}^\dag
\Big) \nonumber\\
&=& {\bf \hat\Gamma\;\hat\Delta\; \hat\Gamma\;\dots \hat\Delta\;
  \hat\Gamma}\;.
\label{Mferflav} 
\eea

This shows that any fermionic amplitude built of vertices and
propagators in the mass basis can be formally transformed into the
amplitude given in terms of respective quantities in the flavour
basis.  We can now observe that, as shown explicitly in \eq{mat-prop},
matrix denominators of the loop integrals in the flavour basis always
depend on Hermitian matrices $\mathbf{M M^\dag}$ or $\mathbf{M^\dag
  M}$, and only such combinations would appear as formal arguments of
loop functions.  As a consequence, one can conclude that FET
formulated for Hermitian matrices can always apply to loop functions
appearing in fermionic amplitudes, as well.

From more practical point of view, our derivation leads to the
conclusion that the fermion mixing matrices $\mathbf{U}$ and
$\mathbf{V}$ can appear in amplitudes only is some specific
combinations, namely
\bea
U_{Bi}\, f(m_i^2) \, U_{Ai}^{\star} &=& f(\mathbf{M^\dag
  M})_{BA}\;,\nonumber\\[1mm]
V_{Bi}\, f(m_i^2) \, V_{Ai}^{\star} &=& f(\mathbf{ M M^\dag
})_{BA}\;,\nonumber\\[1mm]
U_{Bi}\, m_i f(m_i^2) \, V_{Ai}^{\star} &=&{M}_{BC}^\dag\, f(\mathbf{M
  M^\dag})_{CA} = f(\mathbf{M^\dag M})_{BC} \, {M}_{CA}^\dag
\;,\nonumber\\[1mm]
V_{Bi}\, m_i f(m_i^2) \, U_{Ai}^{\star} &=& {M}_{BC}\, f(\mathbf{
  M^\dag M})_{CA} = f(\mathbf{ M M^\dag})_{BC} \, {M}_{CA}\;,
\label{eq:fermionFET}
\eea
which can always be expanded using~\eq{theorem}.  

We should notice that the formal treatment followed in this section
can easily generalize to the case of more complicated flavour models,
where sets of flavour fields belong to distinct flavour families.  In
this case, the propagators and the vertices in the general formulae of
\eqs{eq:mfer}{Mferflav}, will carry both internal (flavour) and
external (family-group) indices.  However, only $\bf{\Gamma}$ or
$\bm{\hat\Gamma}$ can associate different family groups because
propagators, $\mathbf{\Delta}(k)$ or $\mathbf{\hat{\Delta}}(k)$ are
block diagonal in family space.  Therefore, one can accommodate in
this formalism amplitudes with complicated flavour structure like, \eg
rare processes in the MSSM with fermions on the external lines and
sfermions and gauginos circulating in loops.  FET formalism not only
allows one to calculate such diagrams in flavour basis but also in any
other ``hybrid" basis of convenience, \eg fermions-gauginos in mass
and sfermions in flavour basis, or any other combination.

\section{Application of FET: neutron EDM in the MSSM}
\setcounter{equation}{0}
\label{sec:nedm}

To illustrate that higher order mass insertion terms can give
physically meaningful bounds we consider the example of the neutron
Electric Dipole Moment (nEDM) in the Minimal Supersymmetric Standard
Model~\cite{Nilles:1983ge,Haber:1984rc,Martin:1997ns}.

The full nEDM can be expressed as a combination of parton level
contributions - EDMs of quarks $d_q$, their chromoelectric dipole
moments (CDM) $c_q$ and the CDM of gluon $c_g$.  The parton moments
are defined as respective coefficients in the effective Hamiltonian:
\bea 
{\cal H}_q &=& {i d_q\over 2}\bar{q} \sigma_{\mu\nu} \gamma_5 q
F^{\mu\nu} - \frac{i c_q}{2} \bar{q} \sigma_{\mu\nu} \gamma_5 T^a q
G^{\mu\nu a}\;,\nonumber\\
{\cal H}_g &=& - \frac{c_g}{6} f_{abc} G^a_{\mu\rho} G^{b\rho}_{\nu}
G^c_{\lambda\sigma}\epsilon^{\mu\nu\lambda\sigma}.
\label{eq:edm}
\eea
The total neutron EDM depend on its hadronic wave function and can be
written as
\bea
E_n = \eta_{ed} d_d +\eta_{eu} d_u + e(\eta_{cd} c_d + \eta_{cu} c_u)
+\frac{e\eta_g\Lambda_X}{4\pi}c_g\;,
\label{eq:fullneut}
\eea
where $\eta_i$ and $\Lambda_X$ are ${\cal O}(1)$ QCD wave-function
factors~\cite{Fuyuto:2013gla} and the chiral symmetry breaking
scale~\cite{Manohar:1983md}, respectively.  Various models can give
significantly different values for $\eta_i$, differing even by an
overall sign.  Thus, eq.~(\ref{eq:fullneut}) and the bounds it puts on
MSSM parameters should be treated as order of magnitude estimates
only, since potential cancellations in~(\ref{eq:fullneut}) depend on
these poorly known coefficients.

The explicit expressions for $d_q, c_q$ and $c_g$ are given in
ref.~\cite{Pokorski:1999hz}.  In this example, we consider only the
dominant gluino contribution to the up-quark EDM and CDM.  Taken
together, their contribution to the nEDM can be expressed as
\bea
E_n = \frac{1}{M_3}\: \sum_{k=1}^6\: \mathrm{Im}
(Z_{U}^{1k}Z_{U}^{4k\star}) \: F(x_{\tilde{U}_k})\;,
\label{eq:nedmu}
\eea
where $Z_U$ and $m_{\tilde{U}_k}$ are up--squark mixing matrices and
physical masses, $M_3$ is the gluino mass (for conventions and the
detailed definitions see~\Refs{Rosiek:1995kg, Rosiek:1989rs}), and we
define the mass ratios, $x_{\tilde Q} \equiv m_{\tilde{Q}}^2/M_3^2$.
The function $F(x)$ is the sum of loop contributions
\bea
F(x) =\frac{e\alpha_s}{18\pi}\left( 8\eta_{eu} C_{12}(x) -
\frac{3g_s\eta_{cu}}{2}\left(18 C_{11}(x) + C_{12}(x)\right)\right)\;,
\label{eq:nedmf}
\eea
with $C_{11},C_{12}$ being certain PV-functions defined as
\bea
C_{11}(x) & = & {-1 + 3x\over 4(1 - x)^2} + {x^2\over 2(1-x)^3} \log
x\,,
\label{eq:cp11}
\\
C_{12}(x) & = & -{x + 1\over 2(1 - x)^2} - {x\over (1-x)^3} \log x\,.
\label{eq:cp12}
\eea

Flavour violation in the MSSM squark sector is strongly bounded by
numerous experiments and known to be very small, $\lesssim
O(10^{-3})$, for down squark mass matrices if the diagonal elements of
those matrices are around the electroweak scale.  Therefore, we assume
for the purposes of this example that the left down soft SUSY breaking
squark mass matrix is approximately diagonal, but not degenerate, of
the form
\bea
(m_{\tilde D}^2)_{LL}=\left(
\begin{array}{ccc}
m^2_{\tilde D} & 0 & 0 \\
0 & m^2_{\tilde D} + \delta m^2_{D12} & 0 \\
0 & 0 & m^2_{\tilde D} + \delta m^2_{D13} \\
\end{array}
\right)\;.
\eea
In the left up-squark sector the off-diagonal mass terms are then
generated by the $SU(2)$ relation:
\bea
(m_{\tilde U}^2)_{LL} \ =\ K\, (m_{\tilde D}^2)_{LL} \, K^\dagger \;,
\label{eq:kmrel}
\eea
where $K$ denotes the Cabibbo-Kobayashi-Maskawa (CKM) matrix.

Consider now the flavour expansion of eq.~(\ref{eq:nedmu}).  In the
first order it constrains the imaginary part of the trilinear
up-squark mixing,
\bea
E_n^{(1)} \supset -\frac{v_2}{ M_3^3\sqrt{2}} \, \mathrm{Im} (A_U^{11} +
Y_u \mu^* \cot\beta)\, F^{[1]}\left(x_{\tilde{U}_{L1}},
x_{\tilde{U}_{R1}} \right)\;,
\label{eq:nedm1}
\eea
where the RHS is now expressed in terms of parameters in flavour
basis.  In particular, the arguments of the first divided difference,
$F^{[1]}$, are given by diagonal elements of up-squark mass matrix,
\bea 
x_{\tilde{U}_L} \equiv \frac{(m_{\tilde U}^2)_{LL}^{11}}{M_3^2}\,\,\,,\,\,
\, x_{\tilde{U}_R} \equiv  \frac{(m_{\tilde U}^2)_{RR}^{11}}{M_3^2}\,.
\eea 
Following the current experimental bound, $|E_n|<2.9 \times
10^{-26}$~\cite{Baker:2006ts}, and bearing in mind potential
cancellations, \eq{eq:nedm1} sets strong bounds on the imaginary
phases of $\mu$ and $A_U^{11}$, of the order of $10^{-3}$ and
$10^{-5}$, respectively, for SUSY mass scale of the order of 1 TeV.

What is interesting, and to our knowledge has not been discussed thus
far in the literature, is that the experimental bound on nEDM is so
strong that it constrains significantly also the {\em real} parts of
up-squark mass insertions, an effect which is easily visible when
analyzing higher orders in MIA expansion.  To avoid lengthy
expressions, let us assume that in the up-squark sector only the $31$
off-diagonal entries do not vanish in the ``right-handed'' soft mass
matrix $(m_{\tilde U}^2)_{RR}^{31}$ and in the trilinear mixing matrix
$A_u^{31}$ and that they are \emph{purely real}.  In addition, the
$(m_{\tilde U}^2)_{LL}$ is defined by the relation to diagonal down
sector in eq.~(\ref{eq:kmrel}).  Then, a non-vanishing contributions
to nEDM are generated from higher orders in mass insertions via the
mixing with the complex CKM matrix elements.  Using the FET up to 2nd
order one can see that the result depends only on the $A_u^{31}$,
\bea
E_n^{(2)} \supset \frac{v_2\sin 2\theta_{13} \cos\theta_{23}}{
  2\sqrt{2} \: M_3^5} \: (\delta m_{D13}^2 - \delta
m^2_{D12}\sin^2\theta_{12} )\, \sin\delta_{CKM}\, \mathrm{Re} A_u^{31}\, 
F^{[2]}\left(x_{\tilde{U}_{L1}}, x_{\tilde{U}_{R1}},
x_{\tilde{U}_{L3}}\right)\;, \nonumber\\
\label{eq:nedm2}
\eea
where $\theta_{12},\theta_{13},\theta_{23}$, and $\delta_{CKM}$ are
respectively, the angles and the CP-violating phase in the standard
CKM matrix parametrization.  Note again that a CP-violating observable
constraints \emph{real} squark mass parameters through the CKM
CP-violating phase.

It is worth noting that even the 3rd order expansion of FET sets
numerically significant constraints on the real parts of flavour
violating parameters.  In particular, the dependence on $(m_{\tilde
  U}^2)_{RR}^{31}$ parameter, absent at lower orders, is now
introduced through,
\bea
E_n^{(3)} &\supset&\frac{v_2\sin2\theta_{13}
  \cos\theta_{23}}{2\sqrt{2}\: M_3^7} \, (\delta m_{D13}^2 - \delta
m_{D12}^2 \sin^2\theta_{12})\, \sin\delta_{KM} \times \nonumber\\
&\times& \mathrm{Re}(m_{\tilde U}^2)_{RR}^{31} \,\mathrm{Re} (A_U^{33}
+ Y_t\mu^*\cot\beta)\, F^{[3]}\left(x_{\tilde{U_L}_1},
x_{\tilde{U_R}_1}, x_{\tilde{U_L}_3}, x_{\tilde{U_R}_3}\right)\;.
\label{eq:nedm3}
\eea
Comparing separately expressions given in \eqs{eq:nedm2}{eq:nedm3}
with the experimental upper bound on the neutron EDM, one can obtain
order of magnitude estimates on, otherwise weakly constrained, $31$
and $13$ elements of the up-squark trilinear and ``right-handed'' soft
mass terms in relation to mass splitting in the down-squark sector.
Such bounds are important e.g.  for analysis of the maximal allowed
decay rates of the top quark to lighter MSSM Higgs boson, $t\to
u\,h$~\cite{Dedes:2014asa}.
The numerical results for a typical MSSM parameter set, obtained using
the full unexpanded mass eigenstates expressions for nEDM and the
SUSY\_FLAVOR library~\cite{ Rosiek:2010ug,Crivellin:2012jv,
  Rosiek:2012kx, Crivellin:2013zka, Rosiek:2014sia}, are collected in
Table~\ref{tab:nedm}.  They all agree both qualitatively and
quantitatively with \eqs{eq:nedm2}{eq:nedm3} that have been obtained
from the FET of \eq{theorem}.

\begin{table}[t]
\begin{center}
 \begin{tabular}{|c|c|c|c|c|c|}
\hline 
$ \delta m_{D13}\;[TeV]$ & $0.2$ & 0.4 & 0.6 & 0.8 & 1 \\[1mm]
\hline
$ |\mathrm{Re}A_U^{31}/M_3| \hspace{-0.7cm}$ & $2.43 \times 10^{-2} $ &
$1.31 \times 10^{-2}$ &$1.12\times 10^{-2}$ & $8.94\times 10^{-3}$ &
$7.85\times 10^{-3}$ \\[1mm]
%
%
$ |\mathrm{Re}(m_{\tilde{U}}^2)_{RR}^{31} /M_3^2| $ & $2.47\times
10^{-2}$ & $1.23\times 10^{-2}$ & $1.11\times 10^{-2}$ & $8.64\times
10^{-3}$ & $7.41\times 10^{-3}$ \\[1mm]
\hline 
\end{tabular} 
\end{center}
\caption{\small Upper bounds on $|\mathrm{Re}A_U^{31}/M_3|$ and
  $|\mathrm{Re}(m_{\tilde{U}}^2)_{RR}^{31}/M_3^2|$ imposed by current
  experimental constraints from neutron EDM.  Displayed values were
  obtained assuming $(m_{\tilde{U}}^2)_{RR}^{31}=0$ for the 2nd row,
  $A_U^{31}=0$ for the 3rd row and no other sources of the sfermion
  flavour violation.  Other parameters set to $\tan \beta = 4$, common
  SUSY-scale $M_3=1.1$ TeV and a suitable value of $A_U^{33}$ was
  implicitly chosen to satisfy the $125$ GeV Higgs mass constraint.}
\label{tab:nedm}
\end{table}
%
\begin{figure}[t]
    \centering \includegraphics[width=0.8\textwidth]{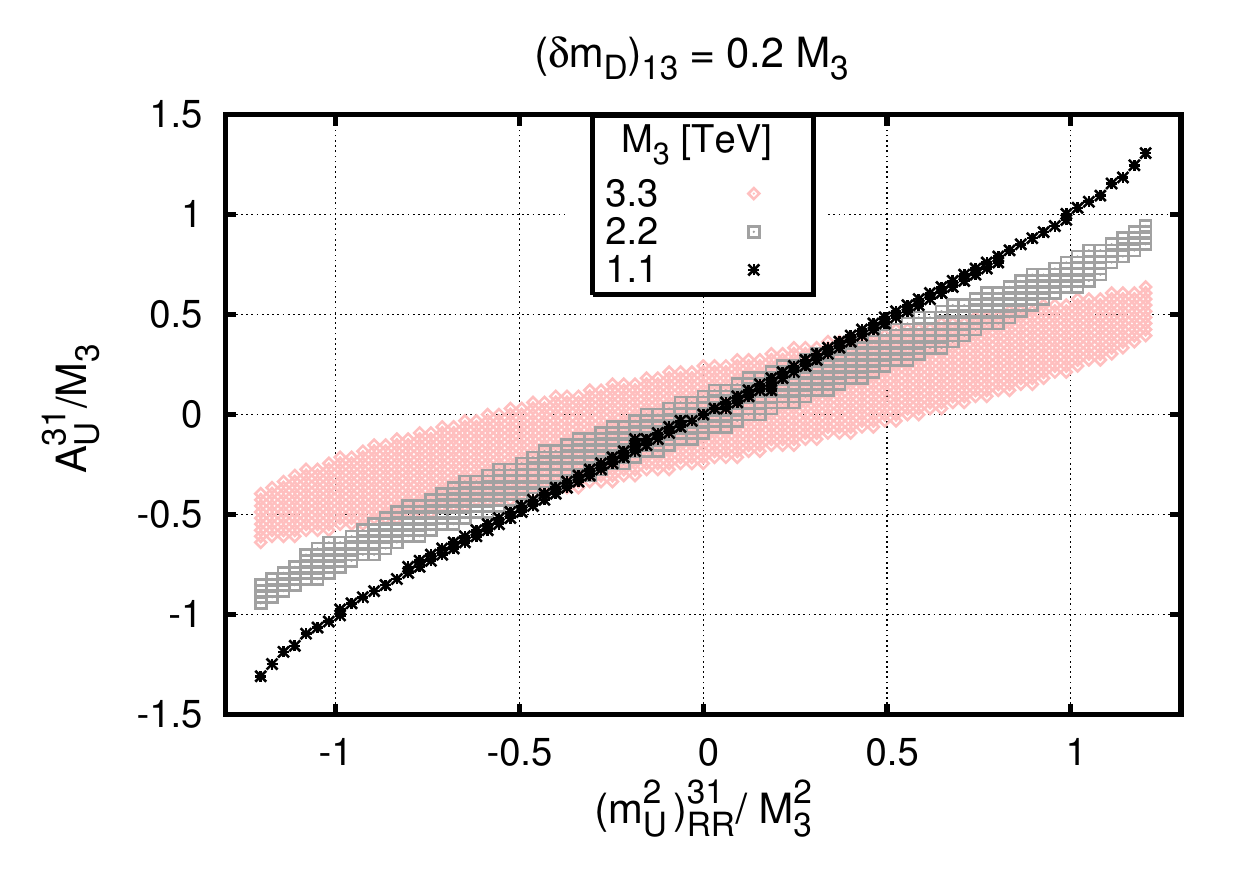}
    \caption{Range of the
      $\mathrm{Re}A_U^{31}/M_3-\mathrm{Re}(m_{\tilde{U}}^2)_{RR}^{31}/M_3^2$
      plane allowed by current experimental constraints from neutron
      EDM.  MSSM parameters defined as in Table~\ref{tab:nedm}.  }
    \label{fig:nedm}
\end{figure}

Alternatively, one can satisfy the nEDM bound by choosing flavour
violating entries large but correlated so that terms in
\eqs{eq:nedm2}{eq:nedm3} cancel each other to large accuracy.  It is
interesting to observe that CKM-related factors in these equations are
identical, so the correlation between $A_U^{31}$ and
$(m_{\tilde{U}}^2)_{RR}^{31}$ is given only by SUSY parameters (of
course once the QCD-related factors in \eq{eq:nedmf} become better
known).  Both terms exactly cancel when
\bea
{\mathrm{Re}[A_u^{31}/M_3]\over \mathrm{Re}[(m_{\tilde
      U}^2)_{RR}^{13}/M_3^2]} = - \frac{A_U^{33} +
  Y_t\mu^*\cot\beta}{M_3} \frac{F^{[3]}\left(x_{\tilde{U_L}_1},
  x_{\tilde{U_R}_1}, x_{\tilde{U_L}_3}, x_{\tilde{U_R}_3}\right)}{
  F^{[2]}\left(x_{\tilde{U_L}_1}, x_{\tilde{U_R}_1},
  x_{\tilde{U_L}_3}\right)}\,.
\label{eq:nedmcanc}
\eea

Eq.~(\ref{eq:nedmcanc}) suggests a linear shape for the allowed
$\mathrm{Re}A_U^{31} - \mathrm{Re}(m_{\tilde{U}}^2)_{RR}^{31}$
parameter space, consistent with the nEDM experimental bound.  This
result is illustrated clearly in Fig.~\ref{fig:nedm}, obtained again
with the \code~ code (\ie without using the MIA expansion), and
assuming values of QCD factors implemented there.  Again this result
follows closely \eqs{eq:nedm2}{eq:nedm3} that have been obtained from
the FET.

Furthermore, we have successfully applied FET to another observable,
namely the rare top decay to light quarks and a Higgs boson, $t\to q
\,h$, in MSSM~\cite{Dedes:2014asa}.  What one practically gains from
using FET in complicated theories, like MSSM for example, is to
algebraically isolate the dominant effects for a given observable
without performing tedious diagrammatic MIA expansion nor extensive
computer scans of a multi-parameters space.  

\section{Summary}
\setcounter{equation}{0}
\label{sec:summary}

In this article we have presented and proved a theorem of matrix
analysis, the Flavour Expansion Theorem (FET), that remarkably
translates any transition amplitude, obtained in terms of mass
eigenstate basis parameters, into its corresponding amplitude in
flavour eigenstate basis, purely algebraically, without the use of
standard diagrammatic methods like the Mass Insertion Approximation
(MIA) method.  Following the formulation of this theorem, any analytic
function of a Hermitian matrix is expanded in terms of its
off-diagonal elements with coefficients being the multi-variable
functions, commonly known as divided differences.  Natural
implementation of such expansion [see \eq{theorem}] comes from the
intimate connection between the divided differences and the
Passarino-Veltman one-loop functions.  Apart from the formal proof, we
have discussed also FET limitations, such as non-analyticity and
convergence issues.  We have also extended the use of the theorem in
case of general transition amplitudes involving fermion mass matrices
which are not necessarily Hermitian.

We have argued many times throughout this article, that the algebraic
derivation of the flavour basis result using FET is substantially
easier, shorter and more compact than the diagrammatic one.  A
pedagogical example is given in Section~\ref{sec:calcs}.  However, we
also illustrate how FET works with a significant example based on
sparticle (gluon-squark) contributions to neutron-EDMs.  In this
example, the use of FET at higher non-trivial orders is capable to set
fairly strong bounds of order $\sim10^{-2}$ on real parts of up-squark
mixing matrix elements from nEDM measurements by assuming that
CP-violation arises only from the CKM-matrix phase.  This FET result
agrees with our exact numerical calculations [see Table~\ref{tab:nedm}
and Fig.~\ref{fig:nedm}~] using \code~ library.  To our knowledge
these bounds are new in the MSSM flavour physics literature and
demonstrate the usefulness of the Flavour Expansion Theorem,
especially, when it applies to complicated models.

\section*{Acknowledgements}

J.~R.  would like to thank University of Ioannina for the hospitality
during his stay there.  His work was supported in part by the Polish
National Science Center under the research grant
DEC-2012/05/B/ST2/02597.  A.~D.  would like to thank A. Romanino for
useful communication.  M.~P.  would like to thank Klaus Bering for
valuable discussions.  This research has been co-financed by the
European Union (European Social Fund - ESF) and Greek national funds
through the Operational Program ``Education and Lifelong Learning" of
the National Strategic Reference Framework (NSRF) - Research Funding
Programs: THALIS and ARISTEIA - Investing in the society of knowledge
through the European Social Fund.

\newpage

\appendix
\renewcommand{\thesection}{Appendix~\Alph{section}}
\renewcommand{\thesubsection}{\Alph{section}.\arabic{subsection}}
\renewcommand{\theequation}{\Alph{section}.\arabic{equation}}

\section{Proof of the Flavour Expansion Theorem}
\setcounter{equation}{0}
\label{app:fet}
The FET theorem formulated in Section~\ref{sec:fetdef} can be proved
using mathematical induction and the notion of the ``fully symmetrized
polynomials''.
 
\begin{subsection}{Fully symmetrized polynomials.  }

The \emph{``fully symmetrized'' homogeneous polynomials of degree
  $N$}~\cite{fulton1991representation}, can be understood through the
following equivalent definitions:
\begin{def1} $Q^M_N(x_1,\ldots,x_M)$ is the direct sum of all
    distinct N-degree monomials constructed out of the given set of
    $M$ variables $x_i$.
\end{def1}   

\begin{def1}
Alternatively, $Q^M_N(x_1,\ldots,x_M)$ is defined as
\begin{align}
Q^M_N(x_1,\ldots,x_M) \equiv\quad\osum{N} x_1^{a_1}\ldots x_M^{a_M}
\equiv \sum_{a_1,\ldots,a_M = 0}^{N}\Big(x_1^{a_1}\dots
x_M^{a_M}\;\Big)\Big|_{{{a_1+\ldots+a_M}=N}} \;\;\;.
\label{QNM}
\end{align}
\end{def1}
Directly from definitions above, one can express the fully symmetrized
polynomial, $Q^M_N$, for any value of $M,N$.  Due to the symmetric
nature of \eq{QNM} there exist many equivalent representations of this
sum.  For non trivial $M,N$, one of these will have the explicit form
\small\begin{align}
Q^M_N(&x_1,\ldots,x_M) = \nonumber \\[1mm]
& (x_1)^N\ +\ (x_1)^{N-1}(x_2+x_3+\ldots)\ +
(x_1)^{N-2}\Big(x_2^2+\ldots+(x_2x_3)+\ldots\Big)+\ldots\ +\ x_1(\dots)\nonumber\\
&\hspace{3cm}+ \quad(x_2)^N\ +\ (x_2)^{N-1}(x_3+\ldots)\ +\ \ldots
\nonumber\\
&\hspace{5cm}\ddots \nonumber \\
&\hspace{6cm}+\ (x_M)^N\;.  \label{QNMx}
\end{align}\normalsize
The identities, $Q_N^0=0$, $Q_0^M=1$, $Q_N^1(x_1)=x_1^N$, also hold
trivially by definition.

Due to \eq{QNM} the factorization property,
\begin{align}
Q^M_N(x_1,\dots ,x_M) \, =\ \sum_{K=0}^N Q^{L}_K(x_1,\dots,x_L)
\:Q^{M-L}_{N-K}(x_{L+1},\ldots,x_M)\;,\qquad (no~sum~over~L)\,
, \label{QNMfac1}
\end{align}
holds for any integer $L$, satisfying $1\leq L \leq M-1$, and for any
choice of, $\{L\}$ and $\{M-L\}$, respective subsets of $M$ variables,
$x_i$.\\[1mm]
\begin{lem}
Fully symmetrized polynomials $Q^{M+1}_N$ are $M$-order divided
differences of the generating functional, $Q_{N+M}^1$, thus satisfying
$Q_N^{M+1}=Q^{1\;[M]}_{N+M}$.  Equivalently, the expression,
\begin{align}
Q_N^{M+1}=Q^{M\;[1]}_{N+1}\;,
\label{lemma} 
\end{align}
holds for any $M\geq 1$.
\end{lem}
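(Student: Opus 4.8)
The plan is to isolate a single ``one-step'' identity, prove it directly from the factorization property~\eqref{QNMfac1}, and then obtain the full $M$-fold statement by induction. Observe first that the two displays in the lemma are not independent: the equivalent form $Q^{M+1}_N=Q^{M\,[1]}_{N+1}$ \emph{is} the one-step identity, asserting that promoting $Q^{M}_{N+1}$ by a single first divided difference in a fresh argument yields $Q^{M+1}_N$; once this is in hand, the first form $Q^{M+1}_N=Q^{1\,[M]}_{N+M}$ follows by iterating it. I would therefore concentrate on the one-step identity.

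The engine is the elementary identity $x^{j}-y^{j}=(x-y)\sum_{a+b=j-1}x^{a}y^{b}$, i.e. $\dfrac{x^{j}-y^{j}}{x-y}=Q^{2}_{j-1}(x,y)$ by~\eqref{QNM}; this is the $M=1$ instance of the one-step identity. For $M\ge 2$ I would use the total symmetry of $Q^{M}_{N+1}$ to difference in the last argument. Peeling off that argument by~\eqref{QNMfac1} with $L=M-1$, together with $Q^{1}_{j}(x)=x^{j}$, gives
\[
Q^{M}_{N+1}(x_1,\dots,x_{M-1},x)=\sum_{K=0}^{N+1}Q^{M-1}_{K}(x_1,\dots,x_{M-1})\,x^{\,N+1-K}.
\]
Forming the difference quotient in $x_M,x_{M+1}$ (the $K=N+1$ term drops, since $x^{0}-x^{0}=0$) and applying the base relation term by term,
\[
Q^{M\,[1]}_{N+1}=\sum_{K=0}^{N}Q^{M-1}_{K}(x_1,\dots,x_{M-1})\,Q^{2}_{N-K}(x_M,x_{M+1}).
\]
But the right-hand side is exactly~\eqref{QNMfac1} applied to $Q^{M+1}_{N}(x_1,\dots,x_{M+1})$ with $L=M-1$, splitting the first $M-1$ variables from the last two; hence $Q^{M\,[1]}_{N+1}=Q^{M+1}_{N}$.

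With the one-step identity established, the first form follows by induction on $M$. Writing $Q^{1\,[M]}_{N+M}=\bigl(Q^{1\,[M-1]}_{N+M}\bigr)^{[1]}$ from the recursive definition~\eqref{dddef}, and using the inductive hypothesis with $N+M=(N+1)+(M-1)$ to rewrite $Q^{1\,[M-1]}_{N+M}=Q^{M}_{N+1}$, one more first divided difference gives $\bigl(Q^{M}_{N+1}\bigr)^{[1]}=Q^{M\,[1]}_{N+1}=Q^{M+1}_{N}$, closing the induction. Degenerate arguments need no separate treatment: both sides are polynomials, so the identity proven for distinct arguments extends by continuity to coincident ones, consistent with the limit~\eqref{ddlimit}.

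The main obstacle is the bookkeeping inside the one-step identity, where~\eqref{QNMfac1} must be invoked twice in opposite directions---once to peel a single variable off $Q^{M}_{N+1}$ \emph{before} differencing, and once to recombine the two differenced variables into a $Q^{2}$ block \emph{after} differencing. Getting the index ranges to match, in particular seeing the top term of the sum vanish so that the degree drops from $N+1$ to $N$, is the one place where care is needed; the total symmetry of both $Q$ and the divided difference is what licenses differencing in whichever pair of slots makes the factorization cleanest.
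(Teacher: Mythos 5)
Your proof is correct and takes essentially the same route as the paper's: both establish the one-step identity $Q^{M+1}_{N}=Q^{M\,[1]}_{N+1}$ from the factorization property \eqref{QNMfac1} combined with the $M=1$ geometric-sum base case $\bigl(x^{j}-y^{j}\bigr)/(x-y)=Q^{2}_{j-1}(x,y)$, and then obtain the iterated form $Q^{M+1}_{N}=Q^{1\,[M]}_{N+M}$ by induction. The only difference is bookkeeping: the paper expands $Q^{M+1}_{N}$ into $Q^{2}_{K}(x_1,x_2)\,Q^{M-1}_{N-K}(y)$ blocks and differences the first pair of slots, whereas you peel a single slot off $Q^{M}_{N+1}$, difference there, and recombine into $Q^{2}$ blocks afterwards --- equivalent by the total symmetry both arguments invoke.
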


\begin{proof}
First, we show the validity of \eq{lemma}, for $M=1$, namely
\begin{equation}
Q^2_{N}=Q^{1\;[1]}_{N+1}\;.
\label{lemmaM1}
\end{equation}
Applying \eq{QNMfac1}, we have
\begin{align}
&(x_1-x_2)Q^2_{N}(x_1,x_2)= \sum_{K=0}^N \Big(\; Q^{1}_{N-K+1}(x_1)
  Q^{1}_{K}(x_2)- Q^{1}_{K}(x_1) Q^{1}_{N-K+1}(x_2)\;\Big)\nonumber\\
&=Q^{1}_{N+1}(x_1)-Q^{1}_{N+1}(x_2)+\sum_{K=1}^N Q^{1}_{N-K+1}(x_1)
  Q^{1}_{K}(x_2)-\sum_{K=1}^N Q^{1}_{K}(x_1)
  Q^{1}_{N-K+1}(x_2)\nonumber\\
&=Q^{1}_{N+1}(x_1)-Q^{1}_{N+1}(x_2)\label{lemmaM11}\;,
\end{align}
which is equivalent to \eq{lemmaM1}.   Now it is straightforward to
verify \eq{lemma} for $M>1$, as well.   Denoting
$y\equiv\{x_3,\dots,x_{M+1}\}$, we have
\begin{align}
&(x_1-x_2)Q^{M+1}_{N}(x_1,x_2,y)\overset{\eqref{QNMfac1}}{=}
  (x_1-x_2)\sum_{K=0}^N Q^{2}_{K}(x_1,x_2)
  Q^{M-1}_{N-K}(y)\;\nonumber\\
&\hspace{-0.2cm}\overset{\eqref{lemmaM11}}{=}\sum_{K=0}^N
  \Big(Q^{1}_{K+1}(x_1)-Q^{1}_{K+1}(x_2)\Big)Q^{M-1}_{N-K}(y)\nonumber\\
&=\sum_{K=1}^{N+1} \Big(Q^{1}_{K}(x_1)-Q^{1}_{K}(x_2)\Big)
  Q^{M-1}_{N-K+1}(y)\ +\ \Big(Q^{1}_{0}(x_1)-Q^{1}_{0}(x_2)\Big)
  Q^{M-1}_{N+1}(y)\nonumber\\
&=Q^{M}_{N+1}(x_1,y)-Q^{M}_{N+1}(x_2,y) \;,
\end{align}
and therefore finishing the proof of the lemma.
\end{proof}

\end{subsection}

\begin{subsection}{Flavour Expansion Theorem: the proof}

\begin{proof}
Due to Def.~\ref{def1}, the Hermitian matrix function $f(\mathbf{A})$,
can be expressed as a power series,
\begin{equation}
f(\mathbf{A})\ = \ \sum_{m=0}^\infty
c_m \mathbf{A}^m \;.\hspace{2cm}
\label{fAexp}
\end{equation}
One can apply the matrix decomposition $\mathbf{A} = \mathbf{A_0}
+ \mathbf{\hat{A}}$ to the above series (convergent by assumption) and
rearrange terms collecting together the same powers of
$\mathbf{\hat{A}}$.  Assuming that the resulting summation remains
convergent, we have
\begin{align}
f(\mathbf{A})& =\ c_0 \,\mathbf{I} \ +\ c_1 \mathbf{A_0}\;\ +\ \;c_2
\:\mathbf{A}_\mathbf{0}^2\ \; + \ \;\;\; c_3\:
\mathbf{A}_\mathbf{0}^3\;\;\,\ + \ \:\dots &:\: \bf{F_{0}} \nonumber
\\[1mm]
&\hspace{1.3cm}+\ \;c_1 \mathbf{\hat{A}}\ +\ c_2
\langle\mathbf{\hat{A}}\mathbf{A}_\mathbf{0}\rangle\ + \ c_3
\langle\mathbf{\hat{A}} \mathbf{A}_\mathbf{0}^2 \rangle\ + \ \ c_4
\langle\mathbf{\hat{A}} \mathbf{A}_\mathbf{0}^3 \rangle\ + \ \dots &
:\: \bf{F_{1}}\nonumber \\[1mm]
&\hspace{3.cm}+\ \;\,c_2\; \mathbf{\hat{A}}^2\;\ +\ \; c_3
\langle\mathbf{\hat{A}}^2\mathbf{A}_\mathbf{0}\rangle\, + \ c_4
\langle\mathbf{\hat{A}}^2 \mathbf{A}_\mathbf{0}^2 \rangle\ + \ \dots &
:\: \bf{F_{2}}\nonumber \\[1mm]
&\ddots \hspace{4cm}& \vdots \hspace{0.65cm} \nonumber \\[1mm]
& + \ c_M \mathbf{\hat{A}}^M \ + \ c_{M+1}
\langle\mathbf{\hat{A}}^{M}\mathbf{A}_\mathbf{0}\rangle \ +
\ \dots\ +\ \;\; c_n \langle\mathbf{\hat{A}}^{M}
\mathbf{A}_\mathbf{0}^{n-M}\rangle\;\ + \ \dots \: & :\:
\bf{F_{M}}\nonumber \\[2mm]
 & \hspace{3.0cm} + \ c_{M+1}\;
\mathbf{\hat{A}}^{M+1}\ +\ \dots\ +\ c_n \langle\mathbf{\hat{A}}^{M+1}
\mathbf{A}_\mathbf{0}^{n-M-1}\rangle\ +\ \dots &:\:
\bf{F_{M+1}}\nonumber \\
&\hspace{4.cm}\ldots & \ldots\hspace{0.25cm}
\label{fAexpM}
\end{align}
where we have defined
\bea
\langle\mathbf{\hat{A}}^m\mathbf{A}_\mathbf{0}^n\rangle \equiv \sum_{P
  -\textrm{distinct}}\mathbf{\hat{A}}^m \mathbf{A}_\mathbf{0}^n\;,
\eea
for all distinct permutations of the set $\{{\mathbf{\hat{A}}_{\,} \:
  \dots\,} ,\;{\mathbf{A}_\mathbf{0}\dots }\}$ of $(m+n)$ objects.

The matrix element $\bra{I}f(A)\ket{J}$ will be given by the sum
\begin{eqnarray}
\big(f(A)\big)^{IJ} = \sum_{N=0}^{\infty} F_{N}^{IJ}
\ =\ F_{0}^{IJ}\ +\ F_{1}^{IJ}\ + \ F_{2}^{IJ}\ +\ \dots \;,
\end{eqnarray}
where, by direct calculation, the above terms are given by (summation
over repeated internal indices $K_i$, is considered - also if they
appear more than twice),
\begin{subequations}
\begin{align}
F_{0}^{IJ} & =  \delta^{IJ} f(A_0^I)\;, \\[2mm]
F_{1}^{IJ} & = \hat{A}_{IJ} \Big( c_1 \ +\ c_2 \left[A_0^I +
  A_0^J\right] \ +\ c_3 \left[(A_0^I)^2 + (A_0^J)^2 + A_0^I
  A_0^J\right] + \ldots \Big) \;, \\[2mm]
F_{2}^{IJ} & = \hat{A}_{IK_1} \hat{A}_{K_1J} \Big( c_2 \ +\ c_3
\left[A_0^I + A_0^J + A_0^{K_1}\right] \nonumber\\[2mm]
& \ +\ c_4 \left[(A_0^I)^2 + (A_0^J)^2 + (A_0^{K_1})^2 + A_0^I A_0^J +
  A_0^I A_0^{K_1} + A_0^J A_0^{K_1}\right]\ + \ \ldots \Big)
\nonumber\\
&= \hat{A}_{IK_1}\hat{A}_{K_1J}\: \sum_{N=0}^\infty c_{2+N}
Q_N^{3}(A_0^I,A_0^J,A_0^{K_1}) \;, \\
\ldots\hspace{2mm} & \hspace{7cm}\nonumber\\
\,F_M^{IJ} & = \hat{A}_{IK_1}\hat{A}_{K_1 K_2}\ldots
\hat{A}_{K_{M-1}J}\: \sum_{N=0}^\infty c_{M+N}
Q_N^{M+1}(A_0^I,A_0^J,A_0^{K_1},\ldots,A_0^{K_{M-1}}) \;,\\
\ldots \;.&&\nonumber
\end{align}
\label{eq:qexp}
\end{subequations}
To prove the theorem, we need to show
\begin{equation}
\sum_{N=0}^\infty c_{M+N} Q_N^{M+1}(A_0^I,A_0^J,A_0^{K_1}, \ldots,
A_0^{K_{M-1}}) = f^{[M]}(A_0^I,A_0^J,A_0^{K_1},\ldots,A_0^{K_{M-1}})\;,
\end{equation}
for all $M\geq 0$.  This can be realized using mathematical induction.
For $M=0$, we obtain trivially
\begin{align}
&\sum_{N=0}^\infty c_{N} Q_N^{1}(A_0^I)=\sum_{N=0}^\infty c_{N}
  (A_0^I)^N=f(A_0^I)\equiv f^{[0]}(A_0^I)\;.
\end{align}
Now, let
\begin{align}
&\sum_{N=0}^\infty c_{M+N}
Q_N^{M+1}=f^{[M]}\;,
\end{align}
holds for some $M>0$ and for any set of $M+1$ arguments.  Then, we
need to show that
\begin{align}
&\sum_{N=0}^\infty c_{M+N+1}
  Q_N^{M+2}(A_0^I,A_0^J,A_0^{K_1},\ldots,A_0^{K_{M}}) =
  f^{[M+1]}(A_0^I,A_0^J,A_0^{K_1},\ldots,A_0^{K_{M}})\;,
\end{align}
which, by Def.~\ref{def2} of divided differences in \eq{dddef}, is
equivalent to showing
\begin{align}
(A_0^I-A_0^{J})\sum_{N=0}^\infty c_{M+N+1}\, &
  Q_N^{M+2}(A_0^I,A_0^J,A_0^{K_1}, \ldots,
  A_0^{K_{M}})\hspace{5cm}\nonumber\\
&\overset{\eqref{lemma}}{=}\sum_{N=0}^\infty c_{M+N+1}
\Big(Q_{N+1}^{M+1}(A_0^I,A_0^{K_1},\ldots) -
Q_{N+1}^{M+1}(A_0^J,A_0^{K_1},\ldots)\Big)\nonumber\\
&=\sum_{N=1}^\infty c_{M+N} \Big(Q_{N}^{M+1}(A_0^I,A_0^{K_1},\ldots) -
Q_{N}^{M+1}(A_0^J,A_0^{K_1},\ldots)\Big)\nonumber \\
&+\ c_M \Big(Q_{0}^{M+1}(A_0^I,A_0^{K_1},\ldots) -
Q_{0}^{M+1}(A_0^J,A_0^{K_1},\ldots)\Big)\hspace{1.15cm}\nonumber\\
&=\sum_{N=0}^\infty c_{M+N} \Big(Q_{N}^{M+1}(A_0^I,A_0^{K_1},\ldots) -
Q_{N}^{M+1}(A_0^J,A_0^{K_1},\ldots)\Big)\nonumber\\
&=f^{[M]}(A_0^I,A_0^{K_1},\ldots) -
f^{[M]}(A_0^J,A_0^{K_1},\ldots)\;, \hspace{2.75cm}
\end{align} 
and hence the theorem is proved.
\end{proof}

\end{subsection}

\section{Convergence criterion for FET expansion of the one-loop functions}
\setcounter{equation}{0}
\label{app:pvconv}

It is well known that, any one-loop amplitude can be expressed as a
linear combination of ``master'' PV-integrals with trivial \ie equal
to 1, integrand numerator.  Thus, it is sufficient to find a
convergence criterion for the FET expansion only for master integrals.
Below we formulate such a criterion for the most often considered case
of loop functions with vanishing external momenta.  The same criterion
can be applied to coefficients of the expansion of one-loop integrals
in terms of external momenta (assuming that they are far from
thresholds and momentum expansion can be performed) - such
coefficients can be also reduced to combinations of master integrals
with vanishing momenta.

For vanishing external momenta master integrals can be expressed as
\bea
PV^{(n)}_0(m_1^2,\dots,m_{n}^2) = -i(4\pi)^2\int
\frac{d^4p}{(2\pi)^4}\frac{1}{\prod_{j=1}^{n} (p^2-m_{j}^2)} = (-1)^n
\int_0^\infty \frac{u du}{\prod_{j=1}^{n} (u+m_{j}^2)}\;,
\label{eq:PVn0}
\eea
where we assume $n\geq 3$ to avoid divergent integrals - considering
the estimates for finite ones is sufficient to establish the
convergence criterion for FET expansion as it depends only on
behaviour of higher order terms.

Eq.~(\ref{eq:PVn0}) leads immediately to inequality
\bea
\left|PV^{(n+1)}_0(m_1^2,\dots,m_{n}^2,m_{n+1}^2)\right| \leq
\frac{1}{m_{n+1}^2}\left|PV^{(n)}_0(m_1^2,\dots,m_{n}^2)\right| \;.
\label{eq:PVn1}
\eea
Applying this inequality, iteratively for higher order terms, to
majorize the RHS of \eq{theoremPV} (in what follows we do not write
explicitly any PV-function arguments apart from the ones used in the
expansion), we get
\bea
\left|\left[PV^{(n)}_0(\mathbf{A})\right]_{IJ}\right| & \leq & \left|
\delta_{IJ}PV^{(n)}_0(A^I_0)\right| + \left|
PV^{(n+1)}_0(A^I_0,A_0^J)\hat{A}_{IJ}\right| \nn
&&+ \left| PV^{(n+2)}_0(A^I_0,A_0^J,A_0^K)\hat{A}_{IK}
\hat{A}_{KJ}\right| + \ldots\nonumber\\[2mm]
& \leq & \left| \delta_{IJ}PV^{(n)}_0(A^I_0)\right| + \left|
PV^{(n+1)}_0(A^I_0,A_0^J)\right|\left|\hat{A}_{IJ}\right| \nn
&&+ \left|PV^{(n+2)}_0(A^I_0,A_0^J,A_0^K)\right|
\left|\hat{A}_{IK}\right|\left| \hat{A}_{KJ}\right| + \ldots
\nonumber\\[2mm]
&\leq& \left|PV^{(n)}_0(A^I_0)\right|\left( \delta_{IJ} +
\frac{\left|\hat{A}_{IJ}\right| }{A_0^J}+
\frac{\left|\hat{A}_{IK}\right|}{A_0^K}
\frac{\left|\hat{A}_{KJ}\right|}{A_0^J} + \ldots \right) \\[2mm]
&=& \left|PV^{(n)}_0(A^I_0)\right|\left( \delta_{IJ} +
\sqrt{\frac{A_0^I}{A_0^J}} \left(
\frac{\left|\hat{A}_{IJ}\right|}{\sqrt{A_0^IA_0^J}} +
\frac{\left|\hat{A}_{IK}\right|}{\sqrt{A_0^IA_0^K}}
\frac{\left|\hat{A}_{KJ}\right|}{\sqrt{A_0^KA_0^J}} + \ldots
\right)\right)\;,\nonumber
\label{eq:PVestim}
\eea
where we assume that all indices apart from $I,J$ are implicitly
summed in the range $1\ldots N$.  Let us now define the symmetric
matrix $\mathbf{Q}$ with elements being the absolute values of
dimensionless quantities commonly referred in the literature as ``mass
insertions'' (diagonal elements of $\mathbf{Q}$ vanish by definition
of $\mathbf{\hat A}$ matrix)
\bea
Q_{IJ} = \frac{|\hat{A}_{IJ}|}{\sqrt{A_0^I A_0^J}}\;.
\label{eq:deldef}
\eea
Then \eq{eq:PVestim} can be expressed as
\bea
\left|\left[PV^{(n)}_0(\mathbf{A})\right]_{IJ}\right| & \leq &
\left|PV^{(n)}_0(A^I_0)\right|\left( \delta_{IJ} +
\sqrt{\frac{A_0^I}{A_0^J}} \left(\mathbf{Q} + \mathbf{Q}^2 + \ldots
\right)_{IJ} \right)\;.
\label{eq:PVdel}
\eea
The expression in the inner parenthesis of the RHS of \eq{eq:PVdel} is
a geometric series.  According to the definition of a function of
Hermitian matrix given in Section~\ref{sec:math}, this series is
convergent if it converges also for any of $\mathbf{Q}$ eigenvalues,
hence their absolute values must be all smaller than 1.  This can be
expressed formally, as
\bea
 \sup_{|| \mathbf{e} ||=1}
 |\mathbf{e}^\intercal\,\mathbf{Q}\,\mathbf{e}|\ = \sup_{|| \mathbf{e}
   ||=1} |\mathbf{e}^\intercal\,\mathbf{D}_Q\,\mathbf{e}|\ <\ 1\;,
\eea
where $\mathbf{e}$ denotes any real unit vector, and $\mathbf{D}_Q$ is
the diagonal matrix of eigenvalues.  Obviously, this is a sufficient
but not necessary condition for the convergence of the expansion.

Finally we should note that vanishing diagonal elements can not pose a
threat for the convergence of the FET expansion in physical theories.
This is because all Hermitian (squared) mass matrices are
semi-positive definite matrices, and for such matrices if $A_0^I = 0$,
then necessarily also $\hat{A}_{IK}=\hat{A}_{KI}=0$ for all $K$.  Thus
all potentially divergent terms vanish.

\newpage

\bibliography{fet}{}

\providecommand{\href}[2]{#2}\begingroup\raggedright\begin{thebibliography}{10}

\bibitem{Weinberg:1967tq}
S.~Weinberg, {\it {A Model of Leptons}},  {\em Phys.Rev.Lett.} {\bf 19} (1967)
  1264--1266.

\bibitem{Glashow}
S.~Glashow, {\it {Partial Symmetries of Weak Interactions}},  {\em Nucl.Phys.}
  {\bf 22} (1961) 579--588.

\bibitem{Salam}
A.~Salam. in {\it Proceedings of the Eighth Nobel Symposium}, edited by N.
  Svartholm (Wiley, New York, 1968), p.367.

\bibitem{Cabibbo:1963yz}
N.~Cabibbo, {\it {Unitary Symmetry and Leptonic Decays}},  {\em Phys.Rev.Lett.}
  {\bf 10} (1963) 531--533.

\bibitem{Kobayashi:1973fv}
M.~Kobayashi and T.~Maskawa, {\it {CP Violation in the Renormalizable Theory of
  Weak Interaction}},  {\em Prog.Theor.Phys.} {\bf 49} (1973) 652--657.

\bibitem{Pontecorvo:1957cp}
B.~Pontecorvo, {\it {Mesonium and anti-mesonium}},  {\em Sov.Phys.JETP} {\bf 6}
  (1957) 429.

\bibitem{Maki:1962mu}
Z.~Maki, M.~Nakagawa, and S.~Sakata, {\it {Remarks on the unified model of
  elementary particles}},  {\em Prog.Theor.Phys.} {\bf 28} (1962) 870--880.

\bibitem{Gabbiani:1996hi}
F.~Gabbiani, E.~Gabrielli, A.~Masiero, and L.~Silvestrini, {\it {A Complete
  analysis of FCNC and CP constraints in general SUSY extensions of the
  standard model}},  {\em Nucl.Phys.} {\bf B477} (1996) 321--352,
  [\href{http://arxiv.org/abs/hep-ph/9604387}{{\tt hep-ph/9604387}}].

\bibitem{Misiak:1997ei}
M.~Misiak, S.~Pokorski, and J.~Rosiek, {\it {Supersymmetry and FCNC effects}},
  {\em Adv.Ser.Direct.High Energy Phys.} {\bf 15} (1998) 795--828,
  [\href{http://arxiv.org/abs/hep-ph/9703442}{{\tt hep-ph/9703442}}].

\bibitem{bhatia97}
R.~Bhatia, {\em Matrix Analysis}, vol.~169.
\newblock Springer, 1997.

\bibitem{0521386322}
R.~A. Horn and C.~R. Johnson, {\em Matrix Analysis}.
\newblock Cambridge University Press, 1990.

\bibitem{Passarino:1978jh}
G.~Passarino and M.~Veltman, {\it {One Loop Corrections for e+ e- Annihilation
  Into mu+ mu- in the Weinberg Model}},  {\em Nucl.Phys.} {\bf B160} (1979)
  151.

\bibitem{de2005divided}
C.~de~Boor, {\it Divided differences},  {\em Surv. Approx. Theory} {\bf 1}
  (2005) 46--69, [\href{http://arxiv.org/abs/math/0502036}{{\tt
  math/0502036}}].

\bibitem{Buras:1997ij}
A.~J. Buras, A.~Romanino, and L.~Silvestrini, {\it {K to pi neutrino
  anti-neutrino: A Model independent analysis and supersymmetry}},  {\em
  Nucl.Phys.} {\bf B520} (1998) 3--30,
  [\href{http://arxiv.org/abs/hep-ph/9712398}{{\tt hep-ph/9712398}}].

\bibitem{Giudice:2008uk}
G.~F. Giudice, M.~Nardecchia, and A.~Romanino, {\it {Hierarchical Soft Terms
  and Flavor Physics}},  {\em Nucl.Phys.} {\bf B813} (2009) 156--173,
  [\href{http://arxiv.org/abs/0812.3610}{{\tt arXiv:0812.3610}}].

\bibitem{Crivellin:2010gw}
A.~Crivellin and J.~Girrbach, {\it {Constraining the MSSM sfermion mass
  matrices with light fermion masses}},  {\em Phys.Rev.} {\bf D81} (2010)
  076001, [\href{http://arxiv.org/abs/1002.0227}{{\tt arXiv:1002.0227}}].

\bibitem{Dedes:2014asa}
A.~Dedes, M.~Paraskevas, J.~Rosiek, K.~Suxho, and K.~Tamvakis, {\it {Rare
  Top-quark Decays to Higgs boson in MSSM}},  {\em JHEP} {\bf 1411} (2014) 137,
  [\href{http://arxiv.org/abs/1409.6546}{{\tt arXiv:1409.6546}}].

\bibitem{Nilles:1983ge}
H.~P. Nilles, {\it {Supersymmetry, Supergravity and Particle Physics}},  {\em
  Phys.Rept.} {\bf 110} (1984) 1--162.

\bibitem{Haber:1984rc}
H.~E. Haber and G.~L. Kane, {\it {The Search for Supersymmetry: Probing Physics
  Beyond the Standard Model}},  {\em Phys.Rept.} {\bf 117} (1985) 75--263.

\bibitem{Martin:1997ns}
S.~P. Martin, {\it {A Supersymmetry primer}},  {\em Adv.Ser.Direct.High Energy
  Phys.} {\bf 21} (2010) 1--153,
  [\href{http://arxiv.org/abs/hep-ph/9709356}{{\tt hep-ph/9709356}}].

\bibitem{Fuyuto:2013gla}
K.~Fuyuto, J.~Hisano, N.~Nagata, and K.~Tsumura, {\it {QCD Corrections to Quark
  (Chromo)Electric Dipole Moments in High-scale Supersymmetry}},  {\em JHEP}
  {\bf 1312} (2013) 010, [\href{http://arxiv.org/abs/1308.6493}{{\tt
  arXiv:1308.6493}}].

\bibitem{Manohar:1983md}
A.~Manohar and H.~Georgi, {\it {Chiral Quarks and the Nonrelativistic Quark
  Model}},  {\em Nucl.Phys.} {\bf B234} (1984) 189.

\bibitem{Pokorski:1999hz}
S.~Pokorski, J.~Rosiek, and C.~A. Savoy, {\it {Constraints on phases of
  supersymmetric flavor conserving couplings}},  {\em Nucl.Phys.} {\bf B570}
  (2000) 81--116, [\href{http://arxiv.org/abs/hep-ph/9906206}{{\tt
  hep-ph/9906206}}].

\bibitem{Rosiek:1995kg}
J.~Rosiek, {\it {Complete set of Feynman rules for the MSSM: Erratum}},
  \href{http://arxiv.org/abs/hep-ph/9511250}{{\tt hep-ph/9511250}}.

\bibitem{Rosiek:1989rs}
J.~Rosiek, {\it {Complete Set of Feynman Rules for the Minimal Supersymmetric
  Extension of the Standard Model}},  {\em Phys.Rev.} {\bf D41} (1990) 3464.

\bibitem{Baker:2006ts}
C.~Baker, D.~Doyle, P.~Geltenbort, K.~Green, M.~van~der Grinten, et~al., {\it
  {An Improved experimental limit on the electric dipole moment of the
  neutron}},  {\em Phys.Rev.Lett.} {\bf 97} (2006) 131801,
  [\href{http://arxiv.org/abs/hep-ex/0602020}{{\tt hep-ex/0602020}}].

\bibitem{Rosiek:2010ug}
J.~Rosiek, P.~Chankowski, A.~Dedes, S.~Jager, and P.~Tanedo, {\it {{\tt
  SUSY\_FLAVOR}: A Computational Tool for FCNC and CP-Violating Processes in
  the MSSM}},  {\em Comput.Phys.Commun.} {\bf 181} (2010) 2180--2205,
  [\href{http://arxiv.org/abs/1003.4260}{{\tt arXiv:1003.4260}}].

\bibitem{Crivellin:2012jv}
A.~Crivellin, J.~Rosiek, P.~Chankowski, A.~Dedes, S.~Jaeger, et~al., {\it {{\tt
  SUSY\_FLAVOR v2}: A Computational tool for FCNC and CP-violating processes in
  the MSSM}},  {\em Comput.Phys.Commun.} {\bf 184} (2013) 1004--1032,
  [\href{http://arxiv.org/abs/1203.5023}{{\tt arXiv:1203.5023}}].

\bibitem{Rosiek:2012kx}
J.~Rosiek, {\it {{\tt SUSY\_FLAVOR} library and constraints on $B_s \to \mu^+
  \mu^-$ decay rate}},  \href{http://arxiv.org/abs/1212.0032}{{\tt
  arXiv:1212.0032}}.

\bibitem{Crivellin:2013zka}
A.~Crivellin and J.~Rosiek, {\it {{\tt SUSY\_FLAVOR} library for rare decays in
  the MSSM}},  {\em PoS} {\bf EPS-HEP2013} (2013) 081,
  [\href{http://arxiv.org/abs/1308.6299}{{\tt arXiv:1308.6299}}].

\bibitem{Rosiek:2014sia}
J.~Rosiek, {\it {{\tt SUSY\_FLAVOR v2.5}: a computational tool for FCNC and
  CP-violating processes in the MSSM}},  {\em Comput.Phys.Commun.} {\bf 188}
  (2014) 208--210, [\href{http://arxiv.org/abs/1410.0606}{{\tt
  arXiv:1410.0606}}].

\bibitem{fulton1991representation}
W.~Fulton and J.~Harris, {\em Representation Theory: A First Course}.
\newblock Graduate Texts in Mathematics / Readings in Mathematics. Springer New
  York, 1991.

\end{thebibliography}\endgroup
\bibliographystyle{JHEP}

\end{document}